\documentclass[11pt, a4paper]{article}
\usepackage{my_style}
\usepackage{aidan_proposed_macros}
\usepackage{tablefootnote} %

\newcommand{\GF}[1]{\mathrm{GF}(#1)}
\newcommand{\Ftwo}{\mathbb{F}_2}
\newtheorem*{mainthm}{Main Theorem}

\title{Encoded reversible addition in constant Toffoli depth}  
\author[1]{Aidan Epperly}
\author[2]{Isaac H. Kim}
\author[3]{Mingyu Lee}
\affil[1]{Department of Mathematics, University of California, Davis, CA, 95616, USA}
\affil[2]{Department of Computer Science, University of California, Davis, CA, 95616, USA}
\affil[3]{Department of Computer Science and Engineering, Seoul National University, Seoul 08826, Republic of Korea}
\date{}
\begin{document}
\maketitle
\begin{abstract}
We show that reversible in-place quantum addition modulo $2^n-1$ can be implemented in constant Toffoli depth using $O(n)$ space, provided that the variables are encoded in a certain non-standard form. Using measurement-based uncomputation, the Toffoli depth can be reduced to one. We also present space--depth tradeoffs and an application to programmable rotations.
\end{abstract}

\section{Introduction}
\label{sec:intro}

Integer addition is one of the most commonly used computational primitives. In the standard binary representation, it is a well-known fact that the textbook method achieves this in depth linear in the number of bits. In fact, it is possible to reduce this depth to be logarithmic in the number of bits, which is optimal for circuits of bounded fan-in~\cite{winograd1965time}. The lower bound seemingly closes this problem completely, but more can be said if we do not demand the number to be stored in the standard binary representation. Indeed, von Neumann's carry save adder can achieve addition in constant depth~\cite{BurksGoldstineVonNeumann1946}. There are alternatives, such as signed-digit representation, that achieve a similar goal~\cite{avizienis1961signed}.

In the context of quantum computing, integer addition continues to play an important role in optimization of quantum algorithms~\cite{Cuccaro2004,Draper2006,Gidney2018}. In the standard binary representation, quantum adders with $O(\log n)$ depth are known, both modulo $2^n$ and modulo $2^n-1$~\cite{Draper2006,Gaur2024}.\footnote{If unbounded fan-out gates are allowed, binary addition can even be done in constant depth~\cite{hoyer2005quantum,takahashi2016collapse}, but these constructions use single-qubit rotations by angles that shrink with $n$, which are not part of the Clifford+Toffoli gate set.} As such, it is natural to ask if the addition circuits developed in the classical world can be used to optimize quantum algorithms as well. To that end, it is important to note that in quantum circuits classical computations must be performed reversibly. This implies, for instance, that the carry save adder or the signed-digit representation, is inadequate. Although these representations allow addition to be done in constant depth, the same integer can have several distinct encodings. 

In order to explain this issue, it will be helpful to formulate the problem as follows. Consider integer addition modulo $N\in \mathbb{N}$. An encoding assigns a bit string $\mathcal{E}(x)$ to each $x\in\mathbb{Z}_N$, and an encoded addition maps the encodings of $x$ and $y$ to an encoding of $x+y$. If every integer has exactly one encoding, we say the encoded addition is \emph{reversible}. (See Section~\ref{sec:addition} for the formal definition.)

In this language, the carry-save adder and signed-digit representation form an encoded addition scheme which is \emph{not} reversible. This is for the simple reason that there can be two different encodings of the same integer. If we were to use such an encoding in quantum computation, we may run into some issues. While the user may have intended to induce constructive interference, it may happen that the encodings of the two branches are different. If so, the two branches remain orthogonal, which can potentially cause issues downstream. As such, in quantum computation, one must use an encoding scheme that is necessarily reversible. One such encoding is the residue number system~\cite{Garner1959}, which stores an integer by its residues modulo several coprime moduli, so that each residue can be added independently. However, the largest modulus must then have $\Omega(\log n)$ bits, so with known adders this gives Toffoli depth $O(\log\log n)$ rather than a constant.

With these considerations in mind, the main question at hand is the ultimate cost of encoded reversible addition. A common way of quantifying this cost is in terms of the standard elementary gate set in quantum computing, which is Clifford+$T$, or alternatively, Clifford+Toffoli. The $T$-gate and Toffoli are examples of the non-Clifford gates, which are essential in completing the universal gate set. Such operations have been considered significantly more expensive than Clifford gates~\cite{BravyiKitaev2005}. For this reason, the standard cost metric has been the number of non-Clifford gates used in the computation. However, the cost of non-Clifford gates has recently been reduced substantially~\cite{litinski2019magic}, motivating studies that minimize the depth of the non-Clifford gates~\cite{goto2016minimizing,Chamberland2020,Itogawa2024,gidney2024magicstatecultivationgrowing,Daguerre2025,daguerre2025experimentaldemonstrationhighfidelitylogical,dasu2025breakingmagicdemonstrationhighfidelity}. 

The main result of our paper is that there is a reversible encoded addition scheme modulo $N=2^n-1$ that admits a constant Toffoli depth, using $O(n)$ ancillas.
\begin{mainthm}[informal]
For every $n\ge 2$, there is an injective encoding of integers modulo
$2^n-1$ using $O(n)$ qubits under which exact in-place addition has $O(1)$ Toffoli depth, uses $O(n)$ total qubits, and requires $O(n)$
Toffoli gates. Moreover, the circuit can be constructed in time polynomial in $n$.
\end{mainthm}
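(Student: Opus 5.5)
The plan is to let the encoding carry enough redundancy that every carry chain stops after a bounded distance, while forcing that redundancy into a unique normal form so that $\mathcal{E}$ stays injective. Working modulo $2^n-1$ helps because $2^n\equiv 1$: the carry leaving the top bit re-enters at the bottom, so every position looks the same and no boundary cases arise. Concretely, I would try encoding $x\in\mathbb{Z}_{2^n-1}$ as a carry-save pair $(s,c)$ of $n$-bit strings with $s+c\equiv x$, where $c$ is written with its index shifted cyclically by one so that $c$ holds the carries. The representative is fixed by a \emph{local} normal-form condition relating $s_i,c_i,s_{i+1},c_{i+1}$ (and perhaps a constant number of further neighbours), chosen so that exactly one pair in each residue class satisfies it. The total qubit count is $2n$ plus ancillas, which is $O(n)$.

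The key steps, in order, are as follows.
\begin{enumerate}
\item Fix the normal form and prove injectivity and surjectivity onto $\mathbb{Z}_{2^n-1}$ by a counting argument: show there are exactly $2^n-1$ normal-form pairs and that no two of them have the same value.
\item Design a layered circuit. A constant number of $3{:}2$ compressor layers (full adders applied bitwise, each of Toffoli depth $1$ and cyclic modulo $2^n-1$) turn $\mathcal{E}(x)$ and $\mathcal{E}(y)$ into a redundant representation of $x+y$ in fresh ancillas.
\item Apply a constant number of local rewriting passes, each depending only on a window of $O(1)$ positions, that map any redundant representation of bounded digit range to the normal form of step 1. The point is to show that the normal form is a fixed point reached after $O(1)$ passes, as in Avizienis-type bounded carry propagation.
\item Make the result in-place. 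Compute $\mathcal{E}(x+y)$ into ancillas, then erase $\mathcal{E}(y)$ by running the same construction on $\mathcal{E}(x+y)$ and the encoding of $-x$ (negation modulo $2^n-1$ is bitwise complement, which is Clifford). Finally uncompute the intermediate garbage, either by running the forward layers in reverse or, for the depth-one claim, by measurement-based uncomputation with Clifford phase fix-ups.
\end{enumerate}
Since each of the four steps uses $O(1)$ Toffoli layers of $O(n)$ gates each, the bounds of $O(1)$ Toffoli depth and $O(n)$ Toffoli count follow. Every gate is specified by a local rule indexed by $i$, so the circuit is produced in time polynomial (indeed linear) in $n$.

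The main obstacle is step 3, together with choosing the normal form in step 1 so that step 3 can work at all. A genuinely local normalisation seems to contradict the $\Omega(\log n)$ carry-propagation lower bound for canonical binary. The normal form must therefore be chosen so that no information has to travel far, for example by letting the unique representative of a residue class depend on $x$ only through local patterns. My first attempt would be a normal form defined by forbidding a finite set of local patterns in $(s,c)$. I would then verify by a transfer-matrix count that the admissible cyclic words number exactly $2^n-1$, and that each local rewrite strictly decreases a potential function, which would make convergence within a constant number of parallel passes plausible. If a single forbidden-pattern normal form fails the count, the fallback is to enlarge the alphabet, using $O(1)$ bits per position, until the count and the locality work out simultaneously.
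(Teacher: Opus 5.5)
There is a genuine gap, and it is the one you flagged, but a cleverer normal form cannot repair it. For \emph{every} injective encoding of $\mathbb{Z}_M$ with $M=2^n-1$, addition cannot be computed by a circuit in which each output bit depends on only $O(1)$ input bits. A constant number of compressor and rewriting passes with $O(1)$ windows is exactly such a circuit.

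To see why, write $h_j(z)$ for the $j$th bit of $\mathcal{E}(z)$, and let $X_j$ be the positions of $\mathcal{E}(x)$ read by output bit $j$. On valid inputs, the function $y\mapsto h_j(x+y)$ depends on $x$ only through $\mathcal{E}(x)|_{X_j}$. So $h_j$ has at most $2^{|X_j|}$ distinct translates, i.e., it is invariant under a subgroup $\Pi_j\le\mathbb{Z}_M$ of index at most $2^{|X_j|}$. Injectivity forces $\bigcap_j\Pi_j=\{0\}$, so $\mathbb{Z}_M$ embeds in $\prod_j\mathbb{Z}_M/\Pi_j$. Hence every prime power exactly dividing $M$ is at most $\max_j 2^{|X_j|}$. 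By Zsigmondy's theorem, $2^n-1$ has a prime factor $q\equiv 1 \pmod n$ for $n\neq 6$, so the locality is at least $\log_2(n+1)$, which is unbounded. When $2^n-1$ is a Mersenne prime, every nonconstant output bit must read at least $n$ bits of each summand.

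Step~1 also fails on its own. A fixed, cyclically invariant set of forbidden patterns admits $\operatorname{tr}(A^n)=\sum_\lambda m_\lambda\lambda^n$ cyclic words, with all multiplicities $m_\lambda\ge 1$. This cannot equal $2^n-1$ for all large $n$, since that would require $m_1=-1$. Enlarging the alphabet does not change this.

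The missing idea is that Toffoli depth does not charge for CNOT networks. What must be bounded is the multiplicative degree of the addition map, not its locality, because arbitrary global $\mathbb{F}_2$-linear maps are free between Toffoli layers. The paper encodes $\alpha$ as $f^\alpha$ for a primitive $f\in\GF{2^n}$, so addition becomes field multiplication. This is a bilinear map $uv=C((Au)\circ(Bv))$ whose output bits are global bilinear forms, consistent with the lower bound above. A Chudnovsky-type decomposition of length $\mu(n)=O(n)$, constructible in polynomial time, then gives a single layer of $\mu(n)$ parallel Toffolis.

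For the in-place version, the paper stores the pair $(f^\alpha,f^{-\alpha})$. Then the old summand $f^\beta=f^{-\alpha}f^{\alpha+\beta}$ (and likewise $f^{-\beta}$) is a bilinear function of the retained registers, and can be erased by $X$-basis measurement with CZ corrections. Your complement-as-negation trick in step~4 plays an analogous role, but it only helps once the addition itself is low-degree rather than local.
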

\noindent
See Theorems~\ref{thm:depth_four_inplace} and~\ref{thm:main_inplace}. For the optimized constants, we refer to Section~\ref{sec:optimized}.

Our result is enabled by a somewhat counterintuitive approach of replacing integer addition by multiplications in finite fields. While multiplication is typically more expensive than addition, for finite fields there is an interesting algorithm that can substantially reduce the multiplication cost~\cite{ChudnovskyChudnovsky1988}. We will leverage this fact throughout this paper. We will also discuss various strategies for optimizing the constants of our scheme. 

Quantum circuits for multiplication in $\GF{2^n}$ have been studied extensively, mainly in the context of Shor's algorithm~\cite{shor1994algorithms} for elliptic curves over binary fields~\cite{Maslov2009,RoettelerSteinwandt2014,KepleySteinwandt2015,vanHoof2020,Banegas2021,Vandaele2025}. In particular, it has been observed that the Toffoli count of such circuits is governed by the bilinear complexity, and multipliers with Toffoli depth one have been constructed~\cite{Jang2023,Liao2025}. Our contribution is to use such multipliers for integer addition, by encoding the integers appropriately, and to make the resulting addition reversible and in place.

The rest of the paper is structured as follows. In Section~\ref{sec:addition} we set up our notation and describe how integer addition modulo $2^n-1$ can be realized by a multiplication over the binary extension field. In Section~\ref{sec:algebraic_geometry} we review results in algebraic geometry that yield a low-complexity multiplication algorithm in this field. In Section~\ref{sec:bilinear_to_circuit} we describe our constant-depth quantum circuit for addition. In Section~\ref{sec:optimized}, we describe methods for optimizing the space-time cost of the addition. In Section~\ref{sec:applications} we discuss applications.

\section{Integer addition from binary extension fields}
\label{sec:addition}
In this section, we formalize the encoded arithmetic tasks considered throughout the paper and then introduce the particular encodings used in our constructions.

Let $M=2^n-1$, and let $\mathbb{Z}_M$ denote the integers modulo $M$.
An encoding consists of a set of valid encodings $\mathcal{C}$ together with a bijection
\begin{equation}
    \mathcal{E}:\mathbb{Z}_M\to\mathcal{C}.
\end{equation}
We write $\mathcal{D}:\mathcal{C}\to\mathbb{Z}_M$ for its inverse on the set of valid encodings, so that
\begin{equation}
    \mathcal{D}(\mathcal{E}(x))=x
\end{equation}
for every $x\in\mathbb{Z}_M$.
Since $\mathcal{E}$ is a bijection, every integer has exactly one valid encoding; this is the sense in which our encoded additions are reversible.
When $\mathcal{E}(x)$ is represented as a binary string, we write
$|\mathcal{E}(x)\rangle$ for the corresponding computational-basis state.
All additions of logical values are understood modulo $M$.

Before we proceed with details, let us first introduce different notions of additions.
First, let us differentiate between in-place and out-of-place addition.
In our language, they can be formulated as follows:
\begin{equation}
\begin{aligned}
    \text{In-place: }\quad
    &|\mathcal{E}(x)\rangle|\mathcal{E}(y)\rangle
    \longmapsto
    |\mathcal{E}(x)\rangle|\mathcal{E}(x+y)\rangle,\\
    \text{Out-of-place: }\quad
    &|\mathcal{E}(x)\rangle|\mathcal{E}(y)\rangle|z\rangle
    \longmapsto
    |\mathcal{E}(x)\rangle|\mathcal{E}(y)\rangle|z\oplus\mathcal{E}(x+y)\rangle.
\end{aligned}
\end{equation}
Here $\oplus$ denotes bitwise XOR, while $x+y$ is evaluated modulo $M$.

We call the qubits in the encoded input/output registers \emph{data qubits},
and any additional temporary qubits \emph{workspace qubits}.
Unless stated otherwise, workspace qubits start and end in $|0\rangle$.
Our primary space measure is the total number of qubits, including both
data and workspace qubits.
Throughout, we assume that the numbers to be added are already stored in encoded form and remain so across repeated additions. We therefore do not include the cost of converting to or from the standard binary representation, which amounts to exponentiation and discrete logarithm in $\GF{2^n}$, respectively. In the application of Section~\ref{sec:applications}, the encoded numbers are loaded directly from classical tables, so no such conversion is needed.

Now let us describe our construction, which is based on the binary extension field $\GF{2^n}$. This is a set of polynomials with binary coefficients modulo an irreducible polynomial of degree $n$. The multiplicative group $\GF{2^n}^\times$ of this field is cyclic of order $2^n-1$. Let $f\in \GF{2^n}^\times$ be a generator (a primitive element), which we choose once and for all. Any nonzero element $g\in \GF{2^n}$ can be uniquely represented as $g = f^{\alpha}$ for some $\alpha \in \mathbb{Z}_M$. Here $\alpha$ is the discrete logarithm of $g$, which we denote as $\log g$.

Let $\alpha, \beta \in \mathbb{Z}_M$. We first introduce the following encoding and decoding scheme:
\begin{equation}
    \mathcal{E}_0(\alpha) = f^{\alpha}, \quad \mathcal{D}_0(g) = \log g.
\end{equation}
In other words, the encoding is an exponentiation of the generator polynomial $f$, addition is performed by multiplication in $\GF{2^n}$, and the decoding is the logarithm of the nonzero element. This scheme is reversible because the encoding map is bijective. The identity
\begin{equation}
    \mathcal{E}_0(\alpha)\mathcal{E}_0(\beta)
    = f^\alpha f^\beta
    = f^{\alpha+\beta}
    = \mathcal{E}_0(\alpha+\beta)
\end{equation}
shows that multiplication in $\GF{2^n}$ implements addition modulo $M$ under this encoding.

Thus, under this encoding, modular addition is reduced to multiplication in the binary extension field. The complexity of this operation can be understood in terms of the \emph{bilinear complexity}. Let $u,v \in \GF{2^n}$ and consider the multiplication map $m_n : \GF{2^n} \times \GF{2^n} \to \GF{2^n}$ which is a bilinear map over $\GF{2}$.
From the universal property of the tensor product, there is a unique tensor 
$$
M_n \in \GF{2^n} \otimes \GF{2^n}^* \otimes \GF{2^n}^*
$$
such that $m_n(u,v) = M_n(u \otimes v)$, where here we use $M_n(u \otimes v)$ to denote tensor contraction.
The tensor $M_n$ will henceforth be called the multiplication tensor.

Every tensor can be decomposed nonuniquely as a sum of simple tensors.
Doing so to the multiplication tensor yields a sum 
\begin{align}
    M_n = \sum_{i = 1}^{\mu(n)} c_i \otimes a_i^* \otimes b_i^*
\end{align}
where $\mu(n)$ would ideally be the tensor rank of $M_n$, though in practice it may be larger than the true tensor rank.
Since contraction of tensors is linear, when we contract $M_n(u \otimes v)$ we obtain the form
\begin{align}\label{eq:bilinear_decomposition}
    M_n(u \otimes v) = \sum_{i = 1}^{\mu(n)} c_i (a_i \cdot u)(b_i \cdot v) = C((Au) \circ (Bv))
\end{align}
where $\circ$ denotes the entrywise product,
$A,B\in\GF{2}^{\mu(n)\times n}$, and
$C\in\GF{2}^{n\times\mu(n)}$ has columns $c_i$.
In this way, the matrices $A$, $B$, and $C$ specify an algorithm for multiplying elements of $\GF{2^n}$.
In this algorithm, the matrix-vector products can be computed using CNOT gates, since $A$, $B$, and $C$ are fixed.
However, each of the $\mu(n)$ bilinear multiplications necessary to compute the entrywise product requires a Toffoli gate.
Since $\mu(n)$ is exactly the number of bilinear operations in our algorithm, we will refer to $\mu(n)$ as the \emph{length} of the decomposition; the minimum length over all decompositions is the \emph{bilinear complexity} of multiplication in $\GF{2^n}$.\footnote{We note that for more general boolean functions one can consider a notion of multiplication complexity~\cite{meuli2019role}.}

We note that Eq.~\eqref{eq:bilinear_decomposition} already ensures that the multiplication can be realized in $O(1)$ Toffoli-depth. Indeed, each multiplication $(a_i \cdot u )(b_i \cdot v)$ can be done in parallel for every $i$ after applying the CNOT gates judiciously. Less obvious is whether $\mu(n)$ can be taken to be something that grows more slowly than $n^2$. (The obvious decomposition of the multiplication trivially yields $\mu(n)\le n^2$.) Remarkably, $\mu(n)=O(n)$ is achievable~\cite{ChudnovskyChudnovsky1988,STV1992,Cascudo2012,Ballet2021survey,BBP2022}. This is a known result whose proof uses algebraic geometry, which we review in Section~\ref{sec:algebraic_geometry}.\footnote{For the application-oriented readers, we note that the sections appearing after Section~\ref{sec:algebraic_geometry} are self-contained and do not depend on that discussion.} 

\section{Algebraic Geometry}
\label{sec:algebraic_geometry}

It remains to explain why the bilinear complexity of multiplication in $\GF{2^n}$ is $O(n)$.
In Section~\ref{sec:chudnovsky}, we focus on the practical mathematical background that one would need to understand the algorithm in an expository manner.
In Section~\ref{sec:projective-line-et-al}, we will present constructions of multiplication algorithms with bilinear complexity $O(n)$, why they are not necessarily practical, and discuss a practical construction attaining quasi-linear bilinear complexity.

\subsection{Chudnovsky-type algorithms}
\label{sec:chudnovsky}

Throughout this section, we denote by $[q]_I$ the residue of a ring element $q \in R$ modulo an ideal $I \subseteq R$. In the case that $I = \langle p \rangle$ is principal, we employ the shorthand $[q]_p$ to denote $[q]_I$.

We are primarily interested in studying the multiplication of elements in the Galois field $\GF{2^n}$. 
The most common construction of $\GF{2^n}$ is as the quotient ring $\GF{2}[x] / \langle p\rangle$ of the polynomial ring $\GF{2}[x]$ by the principal ideal $\langle p \rangle$ for $p \in \GF{2}[x]$ an irreducible degree $n$ polynomial. 
Every equivalence class in $\GF{2}[x]/\langle p \rangle$ has a unique representative element of degree at most $n-1$.

Thus we can uniquely identify every element $u \in \GF{2^n}$ with a polynomial $\tilde{u} \in \GF{2}[x]$ of degree at most $n-1$. In this way, multiplication of elements $u,w \in \GF{2^n}$
can be computed by multiplying their polynomial representatives $\tilde{u},\tilde{w} \in \GF{2}[x]$.

Most efficient algorithms for polynomial multiplication are interpolation-based: relying on the fact that the product of two degree $n-1$ polynomials $q,r$ can be determined by the values that $qr$ takes at $2n - 1$ points.
Here we run into a fundamental limitation inherent to finite fields.
There are only $2$ points in $\GF{2}$.
We will work to rectify this by moving from evaluating polynomials at points to evaluating polynomials at ``places''.

Let $F / \GF{2}$ be an algebraic function field: a finite extension of the field of rational functions $\GF{2}(x)$.
A strict subring $\cal O \subset F$ is called a \emph{valuation ring} when $\cal O$ contains the base field $\GF{2}$ and when $f \notin  \cal O$ implies $f\invs \in \cal O$.
Given a valuation ring $\cal O$, the \emph{place} $P$ associated to $\cal O$ is the set of non-units of $\cal O$, which is $P = \cal O \setminus \cal O^{\times}$.
When we refer to a place $P$, we mean the place $P$ associated to some valuation ring $\cal O$.
Any place $P$ is only associated to exactly one valuation ring $\cal O$, since we can recover $\cal O$ from $P$ by the equation 
$$
\cal O = \cal O_P := \set{f \in F \; : \; fP \subseteq P}.
$$
The place $P$ is the unique maximal ideal of $\cal O_P$, and thus the quotient $\cal O_P / P$ is a field.
We denote the residue field $\cal O_P / P$ by $F_P$ and call it the \emph{residue field at} $P$.
This residue field $F_P$ is a finite degree field extension of the base field $\GF{2}$ and thus $F_P \cong \GF{2^d}$ for some $d \geq 1$.
We say that a place $P$ has \emph{degree} $d$ if $F_P \cong \GF{2^d}$.

We can evaluate elements $f \in F$ at the places $P$ of $F$ in the following way.
Fix a place $P$ and the valuation ring $\cal O_P$ associated to it.
If $f \in \cal O_P$, we say that
\begin{align}
    f(P) = [f]_{P} \in F_P,
\end{align}
which is to say that $f(P)$ is the residue of $f$ mod $P$.
Evaluating a function $f$ at $P$ returns a field element, just like evaluation at a point, but the field $F_P$ will generally not be isomorphic to the base field $\GF{2}$.
If the element $f \notin \cal O_P$, then the inverse $f\invs \in \cal O_P$ is a non-unit of $\cal O_P$ and thus is contained in $P$.
By the definition above, then, $f\invs$ takes the value $f\invs(P) = 0$.
In this case, we say that $f$ has a \emph{pole} at the place $P$.

Places faithfully generalize points.
To see this, consider the field of rational functions $\GF{2}(x)$.
Given a rational function $f = q/r \in \GF{2}(x)$ for $q,r \in \GF{2}[x]$ relatively prime and $\alpha \in \GF{2}$, we would typically define $f(\alpha)$ using the equation
\begin{align}
    f(\alpha) = \frac{q(\alpha)}{r(\alpha)} = \frac{\sum_{i} q_i \alpha^i}{\sum_{i} r_i \alpha ^i}
\end{align}
when $r(\alpha) \neq 0$ or declare that $f$ has a pole at $\alpha$ when $r(\alpha) = 0$.
Alternatively, one can consider the local ring $\cal O_\alpha = \set{f = q/r \in \GF{2}(x) \; : \; r(\alpha) \neq 0}$.
This ring is a valuation ring and the ideal $P_\alpha = \langle x-\alpha \rangle \subseteq \cal O_{\alpha}$ is the place associated to it.
If we evaluate $f$ at $P_\alpha$, the value $f(P_\alpha) \in \cal O_\alpha / \langle x - \alpha \rangle \cong \GF{2}$ is then, up to isomorphism, equal to $f(\alpha)$ as defined above.
This can be seen most easily in the case of a polynomial $q$.
We can obtain a residue class representative of $[q]_{x-\alpha}$ by running polynomial long division to obtain $q = s\cdot(x-\alpha) + q(\alpha)$ and thus $[q]_{x-\alpha} = [q(\alpha)]_{x-\alpha}$.
This extends to rational functions $f = q / r$ by $f(P_\alpha) = q(P_\alpha) r(P_\alpha)\invs$.
More generally, given any degree $d$ irreducible polynomial $p \in \GF{2}[x]$, the local ring $\cal O_p = \set{q/r \; : \; p \nmid r}$ is a valuation ring with degree $d$ place $\langle p\rangle \subseteq \cal O_p$.

The basic idea behind the Chudnovsky algorithm over $\GF{2^n}$ is as follows.
Firstly, we need a function field $F / \GF{2}$ that has a degree $n$ place $Q$ so that $F_Q \cong \GF{2^n}$.
This allows us to represent elements $u,w \in \GF{2^n}$ as elements $[\tilde{u}]_Q,[\tilde{w}]_Q \in F_Q \cong \GF{2^n}$ for $\tilde{u},\tilde{w} \in \cal O_Q \subseteq F$.
So, just as in the case of $\GF{2}[x]$, we can multiply two elements of $\GF{2^n}$ by multiplying $\tilde{u},\tilde{w}$ in $F$.
Secondly, the function field that we pick must have enough low-degree places to determine the product $\tilde{u}\tilde{w}$: this is the step analogous to interpolation at points.
The original algorithm of \cite{ChudnovskyChudnovsky1988} required the function field to have a sufficient number of degree one places.
However, more modern versions of the algorithm can make use of places of higher degree \cite{BBP2020projective,Ballet2021survey,BBP2022}.
Finally, we multiply the two elements $\tilde{u}$ and $\tilde{w}$ by multiplying the values they take at each place and then reconstructing the product $\tilde{u}\tilde{w}$, and more importantly $[\tilde{u} \tilde{w}]_Q$, from this local information.

\subsubsection{Valuations, Divisors, and the Riemann--Roch Space}

The description above is a bare-bones summary of the Chudnovsky algorithm.
For the interested reader, there are a number of technical assumptions that are important to the construction.
Of primary importance is that we are working over a function field, not over the ring of polynomials.
Chudnovsky-type multiplication algorithms depend on the existence of a degree $n$ place $Q$ and a collection of low-degree places that we will denote by $\cal P$.
A consequence of working in a function field is that an element $f \in F$ may have a pole at any or all of the places $\cal P \cup \set{Q}$.
We will handle this in essentially the most straightforward way possible: by disallowing elements that have poles at any of the important places.

Every place $P \subseteq F$ has a map $v_P : F \to \mathbb{Z} \cup \set{\infty}$ associated to it called the \emph{valuation map} which satisfies the properties
\begin{enumerate}
    \item $v_P(fg) = v_P(f) + v_P(g)$,
    \item $v_P(f + g) \geq \min \set{v_P(f), v_P(g)}$,
    \item $v_P(f) = \infty \iff f = 0$.
\end{enumerate}
This map is made unique by the further assumptions that $v_P(f) > 0$ implies $f \in P$ and $v_P(t_P) = 1$ for some $t_P \in P$.
The value $v_P(f)$ is called the \emph{order of vanishing} of $f$ at $P$ and any $t_P \in P$ with $v_P(t_P) = 1$ is called a \emph{local parameter} of $P$.
If the value is positive, then we say $f$ has a zero of order $v_P(f)$ at $P$ and if the value is negative, then we say $f$ has a pole of order $-v_P(f)$ at $P$.
We can track the order of $f \in F$ at every place of $F$ using a \emph{divisor}.
A divisor is a formal $\mathbb{Z}$-linear combination of places
\begin{align}
    D = \sum_P n_P P
\end{align}
with only finitely many non-zero terms.
The divisor $D(f)$ of $f \in F^\times$ is the divisor
\begin{align}
    D(f) = \sum_{P} v_P(f) P.
\end{align}
That $D(f)$ is a divisor amounts to a generalization of the fact that a rational function has finitely many zeros and poles.
A divisor $D = \sum n_P P$ is said to be \emph{effective} if $n_P \geq 0$ for all $P$ and we denote this by $D \geq 0$.
We can extend this to a partial ordering on divisors where $D \geq C$ if $D - C \geq 0$.
For a given divisor $D$, the \emph{Riemann--Roch space} $\cal L(D)$ is the set of all elements $f \in F$ such that $D(f) \geq -D$.
We also include $0 \in \cal L(D)$ despite the fact that $D(0)$ is not well-defined (it takes value $\infty$ at every place).
From the properties of a valuation, it can be shown that $\cal L(D)$ is a finite dimensional vector space over $\GF{2}$.
We call this the Riemann--Roch space since its dimension can be computed using the Riemann--Roch theorem in algebraic geometry.
For the purposes of our multiplication algorithm, we will pick some divisor $D = \sum n_P P$ such that $\rm{supp}(D) = \set{P \; : \; n_P \neq 0}$ is disjoint from $\cal P \cup \set{Q}$.
Since the product of elements $f,g \in \cal L(D)$ is an element $fg \in \cal L(2D)$, both $\cal L(D)$ and $\cal L(2D)$ will be relevant to us.

Through the evaluation map, given a field element $f \in F$ and a place $P$ of degree $d$, we are able to extract a single number $f(P) \in \GF{2^d}$. 
However, in polynomial interpolation, it is quite common to obtain multiple values per point through evaluation of the function's derivatives.
We can employ a similar idea here.
Let $t_P \in P$ be a local parameter of $P$, then $t_P$ is such that $P = t_P\cal O_P$ and $\bigcap_{n = 1}^\infty t_P^n \cal O_P = \set{0}$.
Elements $f \in \cal O_P$ can be expanded in the form
\begin{align}\label{eq:local-expansion}
    f = f(P) + f'(P) t_P + f''(P)t_P^2 + \dots
\end{align}
with the coefficients $f^{(k)}(P) \in F_P$ playing the same role as derivatives.
This gives rise to the notion of a generalized evaluation map.
Given a divisor $D$, a place $P$ not in $\rm{supp}(D)$, and an integer multiplicity $u \geq 1$, we define the \emph{generalized evaluation at} $P$ to be the function 
\begin{equation}\label{eq:generalized-evaluation}
    \begin{aligned}
        \varphi_{D,P,u} : \mathcal{L}(D) & \to F_P^{u}\\
        f & \mapsto (f(P),f'(P),\dots, f^{(u-1)}(P)).
    \end{aligned}
\end{equation}
This can be extended to the case $P \in \rm{supp}(D)$ by first multiplying $f$ by $t_P^{n_P}$, where $n_P$ is the coefficient of $P$ in $D$; since $v_P(f) \geq -n_P$ for $f \in \cal L(D)$, the product $t_P^{n_P} f$ lies in $\cal O_P$.

\subsection{Chudnovsky--Chudnovsky Multiplication Algorithm---Optimal versus Practical}
\label{sec:projective-line-et-al}

In their original paper \cite{ChudnovskyChudnovsky1988}, the Chudnovsky brothers introduced this method, using a degree $n$ place, $Q$, and a large number of degree one places $\set{P_1,\dots,P_N}$, and argued that it leads to bilinear complexity $O(n)$.
To achieve this, they considered function fields arising from modular curves, which had the drawback of only working for $\GF{q^n}$ with $q$ a sufficiently large square, and their argument was completed later~\cite{STV1992,Cascudo2012}. Ballet~\cite{BALLET1999364} gave the first complete proof of linear growth for all $q$, using the so-called Garcia--Stichtenoth tower of function fields~\cite{GarciaStichtenoth1995}, and later work decreased the constant using places of higher degree~\cite{Ballet2021survey}.

These results are stunning, but they are not practical.
It would be unfair to call the results non-constructive; they can be used to construct $A,B,C$ as in Equation~\eqref{eq:bilinear_decomposition}, but until recently there were no known polynomial-time constructions of the algorithms they describe~\cite{Ballet2021survey}.
This is due to the particular difficulty of finding a degree $n$ place and also a suitable divisor in the function fields that they consider.
Once these two items are supplied, the rest of the construction is relatively straightforward.
A polynomial time algorithm for computing $A,B,C$ achieving $O(n)$ bilinear complexity was demonstrated in \cite{BBP2022}. However, its guaranteed constant is large (for $q = 2$, the bound is $648n$), and the machinery that makes it linear only matters asymptotically.
For the relatively small values of $n$ that we care about in quantum computing, a simpler variant suffices: the polynomial time construction over the projective line detailed in \cite{BBP2020projective}, which we now describe.
Here we take our function field to be the field of rational functions $F = \GF{2}(x)$.
This field has three degree one places $P_0, P_1, P_\infty$ corresponding to the three points in the projective line $\mathbb{P}^1$, and the degree $d > 1$ places of $F$ correspond to degree $d$ irreducible polynomials in $\GF{2}[x]$.

The construction proceeds as follows.

For small values of $n$, we find a degree $n$ irreducible polynomial in $\GF{2}[x]$, and thus a degree $n$ place $Q$, by brute-force search.
For larger values of $n$, a more efficient algorithm such as the algorithm of Shoup~\cite{Shoup1990} can be used.
For our divisor, we pick $D = (n-1) P_\infty$ giving us the Riemann--Roch spaces $\cal L(D) = \set{q \in \GF{2}[x] \; : \; \deg(q) \leq n - 1}$, since a rational function having a pole of order at most $n-1$ at $P_\infty$ is exactly a polynomial of degree at most $n - 1$, and $\cal L(2D) = \set{q \in \GF{2}[x] \; : \; \deg(q) \leq 2n - 2}$.
Lastly, we select a sufficient number of places, for instance in order of increasing degree~\cite[Section~4.2]{BP2023generalized}, or by the selection of \cite[Section~5.1]{BBP2020projective}, for which the asymptotic bound below is proved.
This yields a collection of places the degrees of which, up to multiplicity, sum to $2n - 1$.
This matches the dimension of the space $\cal L(2D)$ and corresponds exactly with the interpolation results discussed previously.
Interestingly, this includes selecting the place $P_\infty$, which seemingly contradicts the assumption that $D$ has support disjoint from $\cal P \cup \set{Q}$.
This, however, is handled cleanly by generalized evaluation \eqref{eq:generalized-evaluation}.

Multiplying elements $q,r \in \cal L(D)$ then reduces to multiplying the values that they take at the low-degree places $\cal P$, with low-degree here meaning $d = O(\log n)$.
The values that $q$ and $r$ take at a place of degree $d$ are elements, up to isomorphism, of $\GF{2^d}$.
For small $d$, we can compute an optimal multiplication algorithm by brute-force, but for $d$ larger we instead recurse, applying the same construction above to work out an efficient multiplication algorithm in $\GF{2^d}$.
In this recursive form, the $t$-th layer of recursion requires places of degree $O(\log^{(t)}n)$.
This recursion yields a quasi-linear bilinear complexity, $O(n\,16^{\log^*_{\sqrt 2}(2n)})$ for $q=2$~\cite[Theorem~3]{BBP2020projective}, but for the values of $n$ that we care about this is perfectly acceptable.
Our implementation, whose lengths $\mu(n)$ are shown in Fig.~\ref{fig:mu}, follows this recursive construction with two minor differences: it selects places in order of increasing degree without using $P_\infty$, and it handles some of the places by the analogous construction over $\GF{4}$, as in the hybrid algorithm of \cite{BBP2022}.

\section{Addition in $O(1)$ Toffoli depth}
\label{sec:bilinear_to_circuit}

We now use the bilinear decomposition of the preceding sections to
construct quantum circuits for encoded addition. We first give a
fully unitary construction and then show how measurement-assisted
uncomputation reduces its Toffoli depth.

\subsection{Reversible constructions of encoded addition}

We proceed as follows. We first show that encoded out-of-place addition can be realized in constant Toffoli depth. Then we show that in-place addition can also be realized in constant Toffoli depth, using the out-of-place addition.

\begin{proposition}[Out-of-place addition]
\label{prop:oop_addition}
Suppose multiplication in $\GF{2^n}$ is given by a bilinear
decomposition of length $\mu(n)$. Then, for any
$\alpha,\beta\in\mathbb Z_{2^n-1}$ and any $h\in\GF{2^n}$,
\begin{equation}
\label{eq:encoded_oop_addition}
    |\mathcal E_0(\alpha)\rangle
    |\mathcal E_0(\beta)\rangle
    |h\rangle
    \longmapsto
    |\mathcal E_0(\alpha)\rangle
    |\mathcal E_0(\beta)\rangle
    |h\oplus\mathcal E_0(\alpha+\beta)\rangle
\end{equation}
can be implemented with Toffoli depth $2$, using
$2\mu(n)$ Toffoli gates and $3\mu(n)$ workspace qubits.
The total number of qubits is therefore $3n+3\mu(n)$.
In particular, if $\mu(n)=O(n)$, both the Toffoli count and the total
space are $O(n)$.
\end{proposition}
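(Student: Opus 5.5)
We build the circuit directly from the bilinear decomposition \eqref{eq:bilinear_decomposition}, $uv = C((Au)\circ(Bv))$, with $u=\mathcal E_0(\alpha)$ and $v=\mathcal E_0(\beta)$. Since $\mathcal E_0(\alpha)\mathcal E_0(\beta)=\mathcal E_0(\alpha+\beta)$, it suffices to implement $|u\rangle|v\rangle|h\rangle\mapsto|u\rangle|v\rangle|h\oplus uv\rangle$ for all $u,v,h\in\GF{2^n}$. We use three workspace registers of $\mu(n)$ qubits each, called $R_A$, $R_B$, $R_P$, all initialized to $|0\rangle$.

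\textbf{Step 1 (Clifford).} Using CNOTs, write $Au$ into $R_A$ and $Bv$ into $R_B$. Each bit $(Au)_i=a_i\cdot u$ is a fixed parity of the bits of $u$, so a fan-out of CNOTs from the data register computes it. This step is Toffoli-free.

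\textbf{Step 2 (Toffoli layer 1).} Apply $\mu(n)$ Toffolis in parallel. The $i$-th Toffoli has controls $(R_A)_i$ and $(R_B)_i$ and target $(R_P)_i$. Afterward $R_P$ holds $w=(Au)\circ(Bv)$. These gates act on disjoint qubits, so the layer has Toffoli depth $1$.

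\textbf{Step 3 (Clifford).} XOR $Cw$ into the output register with CNOTs. For each $j$ and each $i$ with $C_{ji}=1$, apply a CNOT from $(R_P)_i$ to $h_j$. The output register now holds $h\oplus uv$.

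\textbf{Step 4 (Toffoli layer 2).} Uncompute $R_P$ by reapplying the same parallel Toffoli layer. This works because $R_A$ and $R_B$ are unchanged, so the layer maps $R_P$ back to $0$.

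\textbf{Step 5 (Clifford).} Undo Step 1 with CNOTs, returning $R_A$ and $R_B$ to $|0\rangle$.

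Every workspace qubit then ends in $|0\rangle$, and the data registers are untouched. The map is correct on every computational basis state, so by linearity it implements \eqref{eq:encoded_oop_addition}.

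\textbf{Resource count.} There are two Toffoli layers of $\mu(n)$ gates each, giving Toffoli depth $2$ and $2\mu(n)$ Toffolis. The workspace is $3\mu(n)$ qubits. The data are three $n$-qubit registers, for a total of $3n+3\mu(n)$ qubits. If $\mu(n)=O(n)$, both the Toffoli count and the total space are $O(n)$.

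\textbf{Main obstacle.} This is a routine point rather than a hard one: CNOT depth is not counted, so long parity fan-ins in Steps 1, 3 and 5 are harmless. One must also check that no step between the two Toffoli layers modifies $R_A$, $R_B$ or $R_P$. Step 3 reads $R_P$ only as control and writes only to the output register, so the second Toffoli layer exactly cancels the first.
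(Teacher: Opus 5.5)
Your proposal is correct and follows essentially the same route as the paper's proof. Both compute $Au$ and $Bv$ with CNOTs, form $(Au)\circ(Bv)$ in one parallel Toffoli layer, add $C((Au)\circ(Bv))=uv$ to the target with CNOTs, uncompute the products with a second identical Toffoli layer, and then reverse the CNOTs; naming the three $\mu(n)$-qubit registers $R_A,R_B,R_P$ just makes the $3\mu(n)$ workspace count explicit.
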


\begin{proof}
Compute $Au$ and $Bv$ using CNOT gates, where
$u=\mathcal E_0(\alpha)$ and $v=\mathcal E_0(\beta)$.
The entries of $(Au)\circ(Bv)$ are computed in parallel using one
Toffoli layer. The linear map $C$ is then applied using CNOT gates to
add
\begin{equation}
    C((Au)\circ(Bv))=uv
\end{equation}
to the target register. A second application of the same Toffoli layer
clears the temporary products, after which the CNOTs used in the computation of $Au$ and $Bv$ are reversed. Since $uv=f^\alpha f^\beta
      =f^{\alpha+\beta}
      =\mathcal E_0(\alpha+\beta),$ this gives the desired out-of-place addition and returns all workspace qubits to $|0\rangle$.
\end{proof}

To obtain an in-place circuit, one must additionally remove $\mathcal{E}_0(\beta)$. For this purpose, we introduce the \emph{inverse-pair encoding}
\begin{equation}
    \mathcal E_p(\alpha)
    =
    \bigl(f^\alpha,f^{-\alpha}\bigr).
\end{equation}
Under this encoding, the desired transformation is
\begin{equation}
    (f^\alpha,f^{-\alpha},f^\beta,f^{-\beta})
    \longmapsto
    (f^\alpha,f^{-\alpha},
     f^{\alpha+\beta},f^{-(\alpha+\beta)}).
\end{equation}
The main advantage of this encoding is that the data that needs to be erased is readily computable from the output in the preceding equation as follows:
\begin{equation}
\label{eq:ip_cleanup_identities}
    f^\beta=f^{-\alpha}f^{\alpha+\beta},
    \qquad
    f^{-\beta}=f^\alpha f^{-(\alpha+\beta)}.
\end{equation}
All in-place circuits below act as stated on valid encodings, i.e., on the span of $\{|\mathcal E_p(\alpha)\rangle|\mathcal E_p(\beta)\rangle\}$.

\begin{theorem}[In-place addition]
\label{thm:depth_four_inplace}
Under the encoding $\mathcal E_p$, exact in-place addition can be
implemented with Toffoli depth $4$, using $8\mu(n)$ Toffoli gates and
$2n+6\mu(n)$ workspace qubits, for a total of
$6n+6\mu(n)$ qubits. In particular, if $\mu(n)=O(n)$, both the
Toffoli count and the total space are $O(n)$.
\end{theorem}

\begin{proof}
First compute $f^{\alpha+\beta}$ and
$f^{-(\alpha+\beta)}$ in parallel in two fresh ancillary registers using
two copies of Proposition~\ref{prop:oop_addition}. Swap these two registers
with the registers holding $\mathcal E_p(\beta)$. Equation~\eqref{eq:ip_cleanup_identities}
then allows the registers containing $f^\beta$ and $f^{-\beta}$ to be
cleared using two further multiply-add operations, again in parallel.

The construction therefore consists of two sequential rounds of
Toffoli depth $2$. The four multiply-add operations use
$8\mu(n)$ Toffoli gates. In addition to the $4n$ data qubits, the
construction uses two $n$-qubit ancillary registers and two parallel
multiplier workspaces, giving $2n+6\mu(n)$ workspace qubits.
\end{proof}

It is evident from these constructions that some of the Toffolis are used for uncomputation. One can remove these using measurement-based uncomputation~\cite{Gidney2018,Jones2013}, discussed below. We actually prove a slightly more general result, which may be more broadly useful.

\begin{lemma}[Bilinear measurement erasure]
\label{lem:bilinear_erasure}
Let $\Phi:\mathbb F_2^p\times\mathbb F_2^q
    \longrightarrow\mathbb F_2^r$ be a bilinear map and consider the following operation:
    \begin{equation}
        |x\rangle|y\rangle|0\rangle \to |x\rangle|y\rangle |\Phi(x,y)\rangle.
    \end{equation}
The register storing $\Phi(x,y)$ can be uncomputed using single-qubit $X$-basis measurements and classically controlled CZ corrections, without using any Toffoli gates.
\end{lemma}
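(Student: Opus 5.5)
Our plan is to generalize the standard measurement-based uncomputation of a Toffoli target (Gidney; Jones). We measure every qubit of the output register in the $X$ basis and then show that the resulting phase is a bilinear form in $x$ and $y$. A bilinear form of this kind can be removed with a layer of CZ gates between the two input registers.

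Start from an arbitrary superposition $\sum_{x,y}\psi_{x,y}|x\rangle|y\rangle|\Phi(x,y)\rangle$. Measuring each of the $r$ output qubits in the $X$ basis yields an outcome string $s\in\mathbb F_2^r$, each string occurring with probability $2^{-r}$. Writing $|z\rangle=2^{-r/2}\sum_{s}(-1)^{s\cdot z}|\widehat s\rangle$ in the Hadamard basis, the post-measurement state of the input registers is
\begin{equation*}
    \sum_{x,y}\psi_{x,y}(-1)^{s\cdot\Phi(x,y)}|x\rangle|y\rangle .
\end{equation*}
The output register is then a known product state. It can be reset to $|0\rangle$ by applying a Hadamard, or by a classically controlled $X$, on each qubit.

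The key step is to express the unwanted phase $(-1)^{s\cdot\Phi(x,y)}$ in a form we can correct. Write $\Phi(x,y)_k=\sum_{i,j}T^{(k)}_{ij}x_iy_j$ with coefficient tensor $T\in\mathbb F_2^{r\times p\times q}$, which exists because $\Phi$ is bilinear. Then
\begin{equation*}
    s\cdot\Phi(x,y)=\sum_{i,j}\Bigl(\sum_k s_kT^{(k)}_{ij}\Bigr)x_iy_j=x^{\mathsf T}M(s)\,y ,
\end{equation*}
where $M(s)=\sum_k s_kT^{(k)}$ is a $p\times q$ matrix over $\mathbb F_2$. The map $s\mapsto M(s)$ is linear and is computed classically from the measurement record. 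The phase $(-1)^{x^{\mathsf T}M(s)y}$ equals $\prod_{(i,j):M(s)_{ij}=1}(-1)^{x_iy_j}$. This is exactly the action of $\mathrm{CZ}$ between qubit $i$ of the first register and qubit $j$ of the second, applied for every pair with $M(s)_{ij}=1$.

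All of these CZ gates are Clifford and diagonal, so they commute with one another, and their application is classically conditioned on $s$. After the corrections the state is $\sum\psi_{x,y}|x\rangle|y\rangle|0\rangle$, as required, and no Toffoli gate has been used.

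I do not expect a real obstacle here. The one point that needs care is the bilinearity observation itself: for a general Boolean $\Phi$ the phase $s\cdot\Phi$ could contain higher-degree terms, which would require non-Clifford corrections such as CCZ. Bilinearity is precisely what keeps the correction at degree two, and therefore within CZ.

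We also note that the registers $|x\rangle$ and $|y\rangle$ may share qubits, or may coincide. In that case a pair with $i=j$ contributes a factor $(-1)^{x_i}$, and the correction on that qubit is a $Z$ gate instead of a CZ; this is still Clifford.

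In our application, $\Phi(x,y)=(Ax)\circ(By)$ is bilinear. Hence the second Toffoli layer in Proposition~\ref{prop:oop_addition} can be replaced by measurements and CZ corrections.
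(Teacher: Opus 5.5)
Your proposal is correct and follows the paper's own proof: you measure the output register in the $X$ basis to obtain the phase $(-1)^{s\cdot\Phi(x,y)}$, use bilinearity to write it as $x^{\mathsf T}M(s)\,y$, cancel it with CZ gates on the nonzero entries of $M(s)$, and then reset the register. One small imprecision is that a qubit left in $|\pm\rangle$ is reset by a Hadamard followed by a classically controlled $X$, not by either gate alone.
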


\begin{proof}
Measuring the output register in the $X$ basis with outcome
$m\in\mathbb F_2^r$ introduces the phase $(-1)^{m\cdot\Phi(x,y)}$. Since $\Phi$ is bilinear, there is a binary matrix $Q_m$ such that
\begin{equation}
    m\cdot\Phi(x,y)=x^TQ_my.
\end{equation}
Applying a CZ gate between the corresponding qubits of $x$ and $y$
for each nonzero entry of $Q_m$ removes this phase. The measured
register can then be reset to $|0\rangle$. By linearity, the procedure
is applicable to arbitrary superpositions.
\end{proof}

Applying Lemma~\ref{lem:bilinear_erasure} to the temporary
coordinatewise products removes the second Toffoli layer of the
preceding construction.

\begin{corollary}[Depth-one out-of-place addition]
\label{cor:depth_one_oop}
The out-of-place encoded addition of
Proposition~\ref{prop:oop_addition} can be implemented with Toffoli depth
$1$, using $\mu(n)$ Toffoli gates and $3\mu(n)$ workspace qubits.
The total number of qubits is therefore $3n+3\mu(n)$.
The circuit uses single-qubit measurements and classically controlled
Clifford corrections.
\end{corollary}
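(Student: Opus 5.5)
The plan is to run the circuit from the proof of Proposition~\ref{prop:oop_addition} unchanged up to the point where the second Toffoli layer was applied, and to replace that layer by Lemma~\ref{lem:bilinear_erasure}. Concretely, with $u=\mathcal E_0(\alpha)$ and $v=\mathcal E_0(\beta)$, I would use CNOTs to compute $Au$ and $Bv$ into two $\mu(n)$-qubit workspace registers. One layer of $\mu(n)$ parallel Toffolis then writes $w=(Au)\circ(Bv)$ into a third $\mu(n)$-qubit register. Finally, CNOTs add $Cw=uv=\mathcal E_0(\alpha+\beta)$ into the target register $|h\rangle$. This step is exactly as before and uses $3\mu(n)$ workspace qubits and Toffoli depth $1$.

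To erase $w$, I would first disentangle the copies $Au$ and $Bv$ from the product. The cleaner choice is to view the map $(u,v)\mapsto w$ as a bilinear map $\Phi:\mathbb F_2^n\times\mathbb F_2^n\to\mathbb F_2^{\mu(n)}$ with $\Phi(u,v)=(Au)\circ(Bv)$. Each coordinate $(a_i\cdot u)(b_i\cdot v)$ is bilinear in $(u,v)$, so $\Phi$ is bilinear.

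The order of operations needs care. After adding $Cw$ to the target, I would reverse the CNOTs that computed $Au$ and $Bv$, returning those two workspace registers to $|0\rangle$. This is valid because $w$ is only read, never modified, by those CNOTs. The state is then $|u\rangle|v\rangle|h\oplus uv\rangle|\Phi(u,v)\rangle$ with the other workspace clean.

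Lemma~\ref{lem:bilinear_erasure}, applied with $x=u$ and $y=v$, then uncomputes $\Phi(u,v)$. I measure the $w$ register in the $X$ basis, obtaining $m$, and apply CZs between qubits of $u$ and $v$ according to $Q_m=\sum_i m_i\, a_i b_i^{T}$, so that $m\cdot\Phi(u,v)=u^TQ_mv$. After this the register is reset to $|0\rangle$. The phase depends only on $(u,v)$, not on $h$, so the correction is exact on arbitrary superpositions, including superpositions over $h$ and over valid encodings.

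The main point to check is that the phase is a function only of registers that survive, namely the two data inputs. That is why I express $\Phi$ in terms of $u,v$ rather than in terms of the temporary registers $Au$ and $Bv$. An alternative is to keep $Au,Bv$ and apply the CZs between those registers, with $Q_m=\mathrm{diag}(m)$, before uncomputing them; this also works and is simpler.

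The counting is then immediate. There is a single Toffoli layer with $\mu(n)$ gates. The workspace is the same $3\mu(n)$ qubits, giving $3n+3\mu(n)$ qubits in total. All remaining operations are CNOTs, $X$-basis measurements and classically controlled CZs, which are Clifford.
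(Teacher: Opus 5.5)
Your proposal is correct and is essentially the paper's proof: one layer of $\mu(n)$ Toffolis, add $C((Au)\circ(Bv))$ to the target, then replace the second Toffoli layer by the measurement erasure of Lemma~\ref{lem:bilinear_erasure}. The only difference is ordering: the paper erases the product register while $Au$ and $Bv$ are still present and only afterwards reverses the CNOTs. That is your ``alternative'', with $Q_m=\operatorname{diag}(m)$ and at most $\mu(n)$ CZs in a single layer, rather than up to $n^2$ CZs between $u$ and $v$ with $Q_m=\sum_i m_i a_ib_i^{T}$; both orderings are valid.
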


\begin{proof}
Compute the coordinatewise products with one Toffoli layer and add
their contribution to the target as in
Proposition~\ref{prop:oop_addition}. Instead of applying a second Toffoli
layer, erase the product register using
Lemma~\ref{lem:bilinear_erasure}. Finally, reverse the CNOT
computations of $Au$ and $Bv$ to return the remaining workspace qubits
to $|0\rangle$.
\end{proof}

The same erasure procedure also removes the second multiplication
round of the in-place construction. After
$f^{\alpha+\beta}$ and $f^{-(\alpha+\beta)}$ have been computed,
Eq.~\eqref{eq:ip_cleanup_identities} expresses each component of
$\mathcal E_p(\beta)$ as a bilinear function of registers that remain.
It can therefore be erased without any additional
Toffoli gates.

\begin{theorem}[Depth-one in-place encoded addition]
\label{thm:main_inplace}
Under the encoding $\mathcal E_p$, exact in-place addition can be
implemented with Toffoli depth $1$, using $2\mu(n)$ Toffoli gates and
$2n+6\mu(n)$ workspace qubits, for a total of
$6n+6\mu(n)$ qubits. The circuit uses single-qubit measurements and
classically controlled Clifford corrections.
\end{theorem}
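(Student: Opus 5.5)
The plan is to follow the structure of Theorem~\ref{thm:depth_four_inplace}, replacing every uncomputation step with Lemma~\ref{lem:bilinear_erasure}. Start from $|f^\alpha\rangle|f^{-\alpha}\rangle|f^\beta\rangle|f^{-\beta}\rangle$ together with two fresh $n$-qubit registers in $|0\rangle$.

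\textbf{Round 1.} Run two copies of Corollary~\ref{cor:depth_one_oop} in parallel, each with its own $3\mu(n)$-qubit multiplier workspace. The first copy takes inputs $(f^\alpha,f^\beta)$ and writes $f^{\alpha+\beta}$ into the first fresh register. The second takes inputs $(f^{-\alpha},f^{-\beta})$ and writes $f^{-(\alpha+\beta)}$ into the second. This is a single Toffoli layer of $2\mu(n)$ Toffolis. The temporary coordinatewise products are then erased by $X$-measurements and CZ corrections, and the CNOT computations of $Au$ and $Bv$ are reversed, so both multiplier workspaces return to $|0\rangle$. Next, swap the new registers with those holding $f^\beta$ and $f^{-\beta}$. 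This costs only Clifford gates, and the ancillary registers now hold $f^\beta$ and $f^{-\beta}$.

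\textbf{Round 2.} The remaining task is to erase these two registers. By Eq.~\eqref{eq:ip_cleanup_identities}, their contents are $\Phi_1(f^{-\alpha},f^{\alpha+\beta})$ and $\Phi_2(f^{\alpha},f^{-(\alpha+\beta)})$, where $\Phi_1=\Phi_2=m_n$ is $\GF{2}$-bilinear. The key point is that on valid encodings, the state of the ancilla register is exactly the value of a bilinear function of registers that persist. Lemma~\ref{lem:bilinear_erasure} then applies directly. We measure each ancilla register in the $X$ basis and obtain outcomes $m_1,m_2$. This leaves phases $(-1)^{m_1\cdot m_n(f^{-\alpha},f^{\alpha+\beta})}$ and $(-1)^{m_2\cdot m_n(f^{\alpha},f^{-(\alpha+\beta)})}$. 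We remove them with CZ patterns given by the matrices $Q_{m_1},Q_{m_2}$, which can be read off from the multiplication tensor, for instance $Q_m=\sum_i (m\cdot c_i)\,a_i b_i^{T}$. Finally the registers are reset to $|0\rangle$. No Toffolis are used in this round.

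\textbf{Correctness and resources.} The main subtlety, and the step I would check most carefully, is that Lemma~\ref{lem:bilinear_erasure} is stated for the map $|x\rangle|y\rangle|0\rangle\mapsto|x\rangle|y\rangle|\Phi(x,y)\rangle$, whereas here the value $f^\beta$ was not produced by that map from the current $x,y$. I would argue by linearity over the span of valid encodings. Each basis branch $|\mathcal E_p(\alpha)\rangle|\mathcal E_p(\beta)\rangle$ is mapped, after Round 1 and the swap, to a product state in which the ancilla equals $\Phi(x,y)$ computed from the persisting registers. The erasure proof only uses this equality branchwise, and the correction depends only on $m$. Hence the global post-measurement state, after correction, is the desired output $|\mathcal E_p(\alpha)\rangle|\mathcal E_p(\alpha+\beta)\rangle$ with coherent amplitudes, independent of $m$.

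Counting resources gives Toffoli depth $1$ and $2\mu(n)$ Toffolis. The workspace consists of two $n$-qubit registers plus two multiplier workspaces of $3\mu(n)$ qubits each, which is $2n+6\mu(n)$ workspace qubits. Adding the $4n$ data qubits gives $6n+6\mu(n)$ in total. Only single-qubit measurements and classically controlled CZ and Clifford corrections are used beyond the Toffoli layer.
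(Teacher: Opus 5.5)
Your proposal is correct and matches the paper's proof: one Toffoli layer of two parallel copies of Corollary~\ref{cor:depth_one_oop} on disjoint registers, then Toffoli-free erasure of $f^\beta$ and $f^{-\beta}$ via Eq.~\eqref{eq:ip_cleanup_identities} and Lemma~\ref{lem:bilinear_erasure}, with the same resource count. The only differences are cosmetic. You swap before erasing, whereas the paper erases and then swaps. You also spell out the branchwise linearity argument and the explicit correction matrix $Q_m=\sum_i (m\cdot c_i)\,a_i b_i^{T}$, which the paper leaves implicit.
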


\begin{proof}
Starting from
$|\mathcal E_p(\alpha)\rangle|\mathcal E_p(\beta)\rangle$,
compute $f^{\alpha+\beta}$ and $f^{-(\alpha+\beta)}$ in two fresh
ancillary registers using two copies of
Corollary~\ref{cor:depth_one_oop}. The two multiplications act on
disjoint registers, so their Toffoli layers can be executed in
parallel.

By Eq.~\eqref{eq:ip_cleanup_identities}, the registers containing
$f^\beta$ and $f^{-\beta}$ can then be erased using
Lemma~\ref{lem:bilinear_erasure}, with no additional Toffoli gates.
Swapping the two newly computed registers with the cleared registers
that held $\mathcal E_p(\beta)$ yields
\begin{equation}
    |\mathcal E_p(\alpha)\rangle
    |\mathcal E_p(\alpha+\beta)\rangle,
\end{equation}
while returning the two additional ancillary registers to $|0\rangle$.

The two parallel multiplications use $2\mu(n)$ Toffoli gates in a
single layer. The $4n$ data qubits are supplemented by two $n$-qubit
ancillary registers and two parallel multiplier workspaces, giving
$2n+6\mu(n)$ workspace qubits and $6n+6\mu(n)$ qubits in total.
\end{proof}

\begin{remark}
Since bilinear decompositions with $\mu(n)=O(n)$ can be constructed
deterministically in time $O(n^4)$~\cite[Theorem~4.1]{BBP2022},
Theorem~\ref{thm:main_inplace} yields an in-place adder with Toffoli depth $1$,
$O(n)$ Toffoli gates, and $O(n)$ total qubits.
\end{remark}

\section{Space--depth tradeoffs}
\label{sec:optimized}

In this section, we focus on reducing the constants in our construction, which are dominated by $\mu(n)$. Known upper bounds on $\mu(n)$ for small $n$ are listed in Table~\ref{tab:mu_bound}, and for $2\le n\le 40$ our implementation of the method in Section~\ref{sec:projective-line-et-al} yields $2.75n \le \mu(n) \le 6.44n$ (Fig.~\ref{fig:mu}). Plugging these numbers into Theorem~\ref{thm:main_inplace} yields roughly $20n$--$40n$ workspace qubits, which can be large in practice. Fortunately, this cost can be reduced substantially using two tricks: (i) evaluating the bilinear terms in batches and (ii) reusing ancillary registers between successive multiplications. Both reduce the space by a constant factor at the cost of increasing the Toffoli depth by a related constant factor.

\begin{table}[h]
\centering
\small
\begin{tabular}{l|ccccccccccccccccc}
$n$ & 2 & 3 & 4 & 5 & 6 & 7 & 8 & 9 & 10 & 11 & 12 & 13 & 14 & 15 & 16 & 17 & 18 \\
\hline
$\mu(n) \le$ & 3 & 6 & 9 & 13 & 15 & 22 & 24 & 30 & 33 & 39 & 42 & 48 & 51 & 54 & 60 & $67^{\dagger}$ & 69 
\end{tabular}
\caption{Explicit upper bounds on the bilinear complexity for the symmetric case (i.e., $A=B$) from Refs.~\cite{CenkOzbudak2009,Ballet2021survey,BalletPieltant2018}. $^{\dagger}$We could not locate an explicit length-$67$ decomposition for $n=17$; Ref.~\cite{CenkOzbudak2009} establishes $\mu(17)\le 68$.}
\label{tab:mu_bound}
\end{table}

\begin{figure}[h]
\centering
\begin{subfigure}[b]{0.48\textwidth}
\centering
\includegraphics[width=\textwidth]{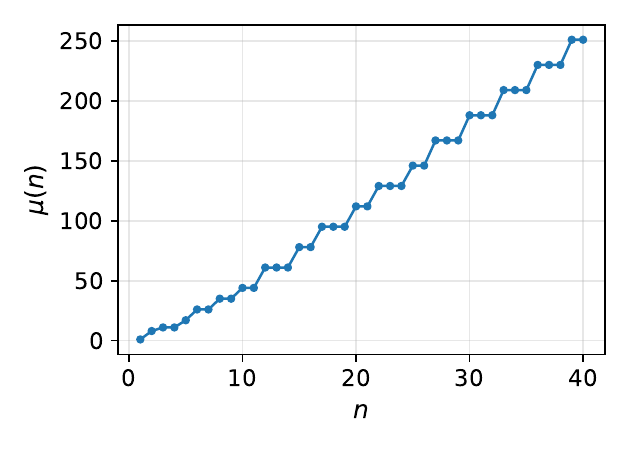}
\caption{}
\label{fig:mu_a}
\end{subfigure}
\hfill
\begin{subfigure}[b]{0.48\textwidth}
\centering
\includegraphics[width=\textwidth]{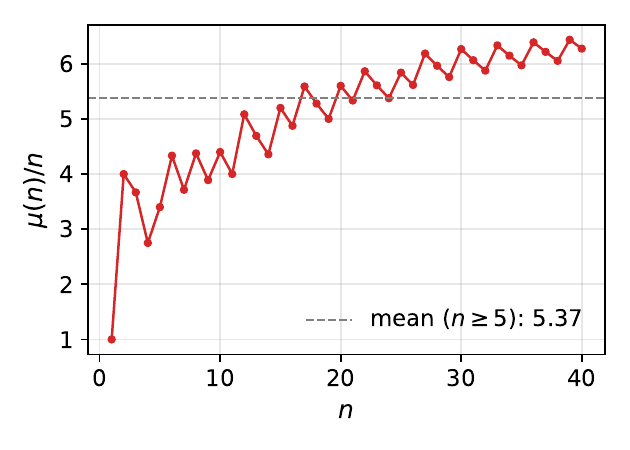}
\caption{}
\label{fig:mu_b}
\end{subfigure}
\caption{(a) Length $\mu(n)$ of the bilinear decomposition produced by our implementation of the recursive Chudnovsky-type construction of Section~\ref{sec:projective-line-et-al}, for $n \le 40$. (b) The ratio $\mu(n)/n$; the dashed line is its mean over $5\le n\le 40$.}
\label{fig:mu}
\end{figure}

\subsection{Workspace reduction for field multiplication}
\label{sec:optimized_multiplier}

The encoded out-of-place addition in Eq.~\eqref{eq:encoded_oop_addition}
uses three $n$-qubit data registers. We first show that these $3n$ qubits suffice to implement the operation in constant Toffoli depth when $\mu(n)=O(n)$.
We then establish a space--depth tradeoff by batching the terms of a fixed bilinear decomposition.

The idea is to divide the bilinear terms into groups of at most
$n$ terms each. If the three lists of vectors $\{a_i\}$, $\{b_i\}$, and $\{c_i\}$
within a group are each linearly independent, CNOT basis changes allow the group's contribution to be accumulated directly in the output register with one Toffoli layer and no workspace.
When this independence condition fails, we split some of the bilinear terms to obtain at most eight groups, each satisfying the condition.

\begin{theorem}[Zero-workspace out-of-place addition]
\label{thm:workspace_free_multiplication}
Given a bilinear decomposition of length $\mu(n)$, the out-of-place addition in Eq.~\eqref{eq:encoded_oop_addition} can be implemented using only CNOT and Toffoli gates, with zero workspace qubits and at most $8\mu(n)$ Toffoli gates, with Toffoli depth
\begin{equation}
\label{eq:workspace_free_depth}
D_{\mathrm{Toffoli}}\le 8\left\lceil\frac{\mu(n)}{n}\right\rceil.
\end{equation}
The circuit uses $3n$ qubits in total and can be constructed in polynomial time. In particular, its Toffoli depth is $O(1)$ whenever $\mu(n)=O(n)$.
\end{theorem}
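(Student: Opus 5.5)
The plan is to partition the $\mu(n)$ bilinear terms $c_i\otimes a_i^*\otimes b_i^*$ into groups of at most $n$ terms, and to process each group with one Toffoli layer that acts directly on the three data registers $u$, $v$, $h$. Consider a group $G$ of size $k\le n$ in which each of $\{a_i\}_{i\in G}$, $\{b_i\}_{i\in G}$, $\{c_i\}_{i\in G}$ is linearly independent. Extend $\{a_i\}$ to a basis of $\GF{2}^n$, and let $S_A$ be an invertible matrix whose first $k$ rows are the $a_i$. Applying $S_A$ to the $u$ register with CNOTs makes its qubit $i$ hold $a_i\cdot u$. Do the same with $S_B$ on $v$. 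For the output, choose an invertible $T$ with $Tc_i=e_i$ and apply $T$ to $h$ by CNOTs. Then the group's contribution $\sum_{i\in G}c_i(a_i\cdot u)(b_i\cdot v)$ becomes $\sum_{i\in G} e_i(a_i\cdot u)(b_i\cdot v)$. This is one layer of $k$ Toffolis, with controls at qubit $i$ of the $u$ and $v$ registers and target at qubit $i$ of $h$; these act on disjoint qubits, so they have depth one. Finally apply $S_A^{-1}$, $S_B^{-1}$ and $T^{-1}$ to undo the basis changes. Because the $u,v$ registers are restored and the update to $h$ is linear in the group contributions, running the groups in sequence XORs $C((Au)\circ(Bv))=uv=\mathcal E_0(\alpha+\beta)$ into $h$ with no workspace. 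The circuit uses only CNOT and Toffoli gates and $3n$ qubits.

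The main obstacle is enforcing the independence condition, since the raw vectors of a decomposition are arbitrary. I would handle the three lists one after another, and at each stage double the number of terms and split into two groups at most.

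For the $a$-side, take the groups of size at most $n$ from a naive chunking, giving $\lceil\mu/n\rceil$ groups. Within a group, pick a maximal independent subset $I$ of $\{a_i\}$. Every other $a_j$ can then be written as a combination of the $a_i$, $i\in I$. I would rather not rewrite terms that way, because it adds terms. The simpler route is to split term $j$ by linearity. Choose a basis vector $e$ outside the span and write $a_j=(a_j+e)+e$, so that $c_j\otimes a_j\otimes b_j=c_j\otimes(a_j+e)\otimes b_j+c_j\otimes e\otimes b_j$. The two pieces go into two different groups, and one must check that each group can be arranged to have independent $a$-vectors.

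A cleaner version is the following. Given $k\le n$ vectors $a_1,\dots,a_k$, I claim there exist vectors $x_i$ such that both $\{x_i\}$ and $\{a_i+x_i\}$ are independent. Take $x_i$ generic, for example by a greedy/random choice over $\GF{2}$, which succeeds once we note that for any fixed subspaces one can choose $x_i$ avoiding two proper subspaces. The term then splits as $a_i=x_i+(a_i+x_i)$ into two groups, each with independent $a$-vectors. The same lemma applied to $b$ and then to $c$ splits each group into two again, for $2^3=8$ groups per original chunk.

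This gives at most $8\lceil\mu(n)/n\rceil$ groups, each costing one Toffoli layer, and at most $8\mu(n)$ Toffolis in total, which is the bound Eq.~\eqref{eq:workspace_free_depth}. Splitting one side preserves independence already achieved on the other sides, because the other two vectors of each term are copied unchanged into both pieces. The genericity step is over $\GF{2}$, so I would make it a deterministic greedy choice. Pick $x_i$ outside $\mathrm{span}(x_{<i})\cup(a_i+\mathrm{span}(a_{<i}+x_{<i}))$. These are at most two cosets, each of size $2^{i-1}\le 2^{n-1}$. Since $i\le n$ and the sets can overlap, there is a counting edge case at $i=n$. There I would fix it by ordering the vectors or by allowing the last element to be handled in an extra group within the stated constant. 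Everything is Gaussian elimination, so the construction runs in polynomial time.
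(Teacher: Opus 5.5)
\textbf{Gap in the splitting lemma.} Your overall architecture matches the paper's. You chunk the terms into $\lceil\mu(n)/n\rceil$ groups of size $k\le n$. Each group whose $a$-, $b$- and $c$-lists are independent is realized by CNOT basis changes plus one Toffoli layer on the data registers. Independence is enforced by three rounds of splitting, each at most doubling the number of groups. The gap is in the splitting lemma you chose.

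Your uniform version splits every term as $a_i=x_i+(a_i+x_i)$ with both $\{x_i\}$ and $\{a_i+x_i\}$ independent. The greedy argument does not establish it. At $i=n$ the excluded set $\mathrm{span}(x_{<n})\cup(a_n+\mathrm{span}(a_{<n}+x_{<n}))$ is all of $\Ftwo^n$ exactly when the two spans coincide and $a_n$ lies outside them. This does occur: for $n=2$, $a_1=0$, $a_2=e_1$, the legal first choice $x_1=e_2$ leaves no valid $x_2$. The lemma is even false for $n=1$, $a_1=1$.

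Neither of your remedies is carried out. ``Ordering the vectors'' is left unspecified. Putting the last element into an extra group produces more than eight groups per chunk, so it does not give $8\lceil\mu(n)/n\rceil$.

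\textbf{The missing idea.} Terms that are already independent need not be split. The paper keeps a maximal independent sublist $\{a_i\}_{i\in I}$ of the group intact and splits only $j\in J=S\setminus I$. It chooses $w_j$ so that $\{a_i\}_{i\in I}\cup\{w_j\}_{j\in J}$ is independent, which is possible because $|S|\le n$. One group receives the $I$-terms together with the $w_j$-pieces; the other receives the $(a_j+w_j)$-pieces. The latter are independent because each $a_j$ lies in $\mathrm{span}\{a_i\}_{i\in I}$, while the $w_j$ are independent modulo that span. This is essentially your first attempt, which you abandoned before making this check.

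\textbf{Repairing your greedy.} The same idea fixes it. If step $n$ fails, the two spans coincide. Then $a_i=x_i+(a_i+x_i)\in\mathrm{span}(x_{<n})$ for all $i<n$ while $a_n\notin\mathrm{span}(x_{<n})$, which forces $a_n\notin\mathrm{span}(a_{<n})$. Hence:
\begin{itemize}
\item when $k=n$ and the $a_i$ are dependent, order them so that $a_n\in\mathrm{span}(a_{<n})$, and step $n$ cannot fail;
\item when $k<n$, the greedy never fails, since the excluded set has at most $2^{k}<2^n$ elements;
\item when the $a_i$ already form a basis, no split is needed.
\end{itemize}
With either fix, the rest of your argument gives the stated depth, Toffoli count, and $3n$-qubit bound.
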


\begin{proof}
Partition the $\mu(n)$ bilinear terms into
$\lceil\mu(n)/n\rceil$ groups of size at most $n$. We first split each group further so that, within each resulting group, the vectors $a_i$ are linearly independent, and likewise for $b_i$ and $c_i$. For a group $S$, let $\{a_i:i\in I\}$ be a maximal linearly independent sublist and let $J=S\setminus I$. Since $|S|\le n$, basis extension gives vectors $w_j\in\Ftwo^n$, for $j\in J$, such that $\{a_i:i\in I\}\cup\{w_j:j\in J\}$ is linearly independent. For each $j\in J$, split the corresponding term as
\begin{equation}
c_j(a_j\cdot u)(b_j\cdot v)
=c_j(w_j\cdot u)(b_j\cdot v)+c_j((a_j+w_j)\cdot u)(b_j\cdot v),
\end{equation}
and place the terms indexed by $I$ together with the first terms on the right-hand side in one group, and the second terms in another. The first group is linearly independent by construction, and so is $\{a_j+w_j:j\in J\}$, since each $a_j$ lies in the span of $\{a_i:i\in I\}$. Repeating this splitting for $b_i$ and then $c_i$, each of which at most doubles the number of groups, gives
\begin{equation}
\sum_{i\in S}c_i(a_i\cdot u)(b_i\cdot v)
=
\sum_{\ell=1}^{r}\sum_{i=1}^{s_\ell}
\tilde c_{\ell,i}
(\tilde a_{\ell,i}\cdot u)(\tilde b_{\ell,i}\cdot v),
\end{equation}
where $r\le 8$, $s_\ell\le n$, and, for each $\ell$, the vectors $\tilde a_{\ell,i}$ are linearly independent, and likewise for $\tilde b_{\ell,i}$ and $\tilde c_{\ell,i}$.

Now fix one of these groups, omitting the index $\ell$, and write its size as $s$. By basis extension, choose invertible matrices $P,Q,R$ such that, for each $i\le s$, the $i$th rows of $P,Q$ are $\tilde a_i^T,\tilde b_i^T$, respectively, and the $i$th column of $R$ is $\tilde c_i$. Apply the CNOT transformations
$(u,v,h)\mapsto(Pu,Qv,R^{-1}h)$, and then, for each $i\le s$, a Toffoli gate controlled on the $i$th coordinates of $Pu$ and $Qv$ and targeting the $i$th output coordinate. These gates act on disjoint qubits and form one layer. Applying the inverse CNOT transformations $P^{-1},Q^{-1},R$ returns the inputs to $u,v$, while the output becomes
\begin{equation}
R\left(R^{-1}h+\sum_{i=1}^{s}e_i(Pu)_i(Qv)_i\right)
=
h+\sum_{i=1}^{s}\tilde c_i
(\tilde a_i\cdot u)(\tilde b_i\cdot v),
\end{equation}
where $e_i$ is the $i$th standard basis vector.

Processing all resulting groups sequentially accumulates their contributions to give $h\mapsto h+uv$. The circuit uses only the three data registers, with at most $8\lceil\mu(n)/n\rceil$ Toffoli layers and $8\mu(n)$ Toffoli gates. All independent sublists, basis extensions, and CNOT circuits can be constructed by Gaussian elimination in time polynomial in $n$ and $\mu(n)$.
\end{proof}

The preceding construction guarantees zero-workspace
out-of-place addition in constant Toffoli depth when
$\mu(n)=O(n)$, but may increase the number of bilinear terms
by a factor of up to eight. To reduce this overhead, we keep the original bilinear
terms and evaluate them in batches, reusing workspace
between successive batches. Here, a \emph{batch} is a subset
$S\subseteq\{1,\ldots,\mu(n)\}$ of the bilinear terms in
Eq.~\eqref{eq:bilinear_decomposition} that are evaluated together in one Toffoli layer.

The workspace required by a batch depends on the linear independence of its input forms and output directions. As in the preceding construction, independent forms and directions can be placed in the existing data registers by CNOT basis changes.

For a batch $S$, let $A_S,B_S$ contain the rows of $A,B$
indexed by $S$, and let $C_S$ contain the corresponding
columns of $C$. Define
\begin{equation}
\label{eq:batch_ranks}
r_A(S)=\operatorname{rank}(A_S),\qquad
r_B(S)=\operatorname{rank}(B_S),\qquad
r_C(S)=\operatorname{rank}(C_S).
\end{equation}

Each input form or output direction that is not part of an independent set of maximal size requires one workspace qubit. Hence a batch $S$ requires
\begin{equation}
\label{eq:batch_workspace}
\kappa(S)
=(|S|-r_A(S))+(|S|-r_B(S))+(|S|-r_C(S))
\end{equation}
workspace qubits.

For a partition into $L$ nonempty batches
$S_1,\ldots,S_L$, the same workspace is reused between batches,
so its peak size is $\max_\ell\kappa(S_\ell)$.
We therefore define
\begin{equation}
\label{eq:kappa_L}
\kappa(L)
=\min_{(S_1,\ldots,S_L)}
\max_{1\le\ell\le L}\kappa(S_\ell),
\end{equation}
where the minimum is over all such partitions of
$\{1,\ldots,\mu(n)\}$.

\begin{theorem}[Batched out-of-place addition]
\label{thm:batched_multiplication}
For every $1\le L\le\mu(n)$, encoded out-of-place addition
can be implemented using $\mu(n)$ Toffoli gates,
Toffoli depth at most $L$, and $\kappa(L)$ workspace qubits,
for a total of $3n+\kappa(L)$ qubits.
The implementation uses single-qubit measurements
and classically controlled Clifford corrections.
\end{theorem}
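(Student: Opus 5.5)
We fix an optimal partition $S_1,\ldots,S_L$ attaining $\kappa(L)$ in Eq.~\eqref{eq:kappa_L}, process the batches sequentially, and show that one batch $S$ can be executed with a single Toffoli layer, $|S|$ Toffoli gates, and $\kappa(S)$ workspace qubits that all return to $|0\rangle$. Summing over batches then gives $\sum_\ell|S_\ell|=\mu(n)$ Toffoli gates and depth $L$. Because the workspace is reused, the peak workspace is $\max_\ell\kappa(S_\ell)=\kappa(L)$, and the total is $3n+\kappa(L)$. Since each batch adds $\sum_{i\in S}c_i(a_i\cdot u)(b_i\cdot v)$ to $h$, the accumulated effect is $h\mapsto h+uv=h\oplus\mathcal E_0(\alpha+\beta)$ by Eq.~\eqref{eq:bilinear_decomposition}.

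For a single batch $S$, we choose a maximal independent sublist $I_A\subseteq S$ of the rows $\{a_i\}$, with $|I_A|=r_A(S)$, and define $I_B$ and $I_C$ analogously. As in the proof of Theorem~\ref{thm:workspace_free_multiplication}, basis extension gives invertible $P,Q,R$ whose first rows (respectively columns) are $a_i^T$ for $i\in I_A$, $b_i^T$ for $i\in I_B$, and $c_i$ for $i\in I_C$. We apply $(u,v,h)\mapsto(Pu,Qv,R^{-1}h)$ with CNOTs, so that for $i\in I_A$ the value $a_i\cdot u$ sits in a data qubit of the $u$-register. For $i\in S\setminus I_A$, the row $a_i$ is a linear combination of $\{a_j:j\in I_A\}$. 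We therefore compute $a_i\cdot u$ into a fresh workspace qubit by CNOTs, either from the transformed $u$-register or from the original $u$ before the basis change, which uses $|S|-r_A(S)$ qubits. The $b$ side is handled in the same way with $|S|-r_B(S)$ qubits.

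The output side needs more care. For $i\in I_C$, the Toffoli targets the corresponding coordinate of $R^{-1}h$, contributing $c_i$ after $R$ is undone. For $i\notin I_C$, the Toffoli targets a fresh workspace qubit $t_i$ initialized to $|0\rangle$, which uses $|S|-r_C(S)$ qubits. Because $c_i$ is a combination of $\{c_j:j\in I_C\}$, CNOTs from $t_i$ onto the appropriate coordinates of $R^{-1}h$ add $c_i t_i$ to the output. All $|S|$ Toffolis act on distinct targets, although a control qubit can be shared between terms with equal forms. Shared controls are harmless for parallelism only if we copy them; the cleaner fix is to note that a repeated form $a_i=a_j$ is not in $I_A$ for both, so one of them is already copied into its own workspace qubit and every Toffoli has distinct control qubits. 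The batch therefore forms one layer.

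Cleanup is where I expect the main obstacle, namely freeing the output workspace $t_i=(a_i\cdot u)(b_i\cdot v)$ without a second Toffoli layer. We use Lemma~\ref{lem:bilinear_erasure}. After the CNOTs transfer $t_i$ into $h$, the register $\{t_i\}$ holds a bilinear function of the input registers $(u,v)$; the input-copy workspace is still present, but it is a linear function of $u$ and $v$, so the phase $(-1)^{m\cdot t}$ can be written as $u^TQ_mv$ and corrected by CZs. We measure $\{t_i\}$ in the $X$ basis, apply the classically controlled CZ corrections, and reset. Finally we uncompute the input-form copies by reversing their CNOTs and apply $P^{-1},Q^{-1},R$, which restores $u,v$ and leaves $h$ increased by the batch contribution. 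The step needing the most checking is the bookkeeping that the measurement phase is expressed in terms of qubits that are still present at correction time; expressing it directly in terms of the data registers $u$ and $v$ sidesteps this.
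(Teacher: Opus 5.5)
Your proposal is correct and follows the paper's proof essentially step for step. In each batch, CNOT basis changes place a maximal independent set of input forms and output directions in the data registers, and the remaining $\kappa(S)$ forms and products go into workspace. The $|S|$ Toffolis then form one layer on disjoint triples, the overflow products are erased via Lemma~\ref{lem:bilinear_erasure}, and the workspace is reused across the batches of a partition attaining $\kappa(L)$. Your worry about shared controls is moot: by construction every term already has its own qubit for each of its two input forms, namely a data coordinate if it lies in $I_A$ or $I_B$ and a dedicated workspace qubit otherwise.
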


\begin{proof}
Consider a batch $S$ of size $s$. As in the proof of Theorem~\ref{thm:workspace_free_multiplication}, use CNOT basis changes to place $r_A(S)$ independent input forms in the first data register and $r_B(S)$ in the second, and compute the remaining forms into $s-r_A(S)$ and $s-r_B(S)$ workspace qubits, respectively. Similarly, choose an output basis that maps $r_C(S)$ independent columns of $C_S$ to distinct standard basis vectors. The corresponding products are accumulated directly in the output register, and the other $s-r_C(S)$ products are computed into workspace. All $s$ Toffoli gates act on disjoint triples and form one layer. After adding the temporary products' contributions to the output using CNOT gates, erase those products using Lemma~\ref{lem:bilinear_erasure}, and reverse the computations of the additional input forms and all basis changes. This restores the inputs and returns all $\kappa(S)$ workspace qubits to zero.

Processing the batches sequentially uses at most $L$
Toffoli layers and evaluates each of the $\mu(n)$ terms once.
Choosing a partition attaining Eq.~\eqref{eq:kappa_L}
gives the stated workspace bound.
\end{proof}

When $\kappa(L)=0$, the circuit uses only the $3n$
data qubits. Every batch must then have independent
input forms and output directions, so it contains
at most $n$ terms. Hence zero workspace requires
\begin{equation}
\label{eq:min_batch_number}
L\ge L_0:=\left\lceil\frac{\mu(n)}{n}\right\rceil.
\end{equation}

A partition into $L$ batches is called \emph{balanced}
if the batch sizes differ by at most one.
If there exists a balanced partition in which every batch $S$
satisfies $r_A(S)=r_B(S)=r_C(S)=\min\{|S|,n\}$, then
\begin{equation}
\label{eq:depth_space_rel}
\kappa(L)
=
3\max\left\{
\left\lceil\frac{\mu(n)}{L}\right\rceil-n,\,
0
\right\}.
\end{equation}
Conversely, the right-hand side of Eq.~\eqref{eq:depth_space_rel} is a lower bound on $\kappa(L)$ for any partition into $L$ batches, since some batch contains at least $\lceil\mu(n)/L\rceil$ terms and every rank is at most $n$.

Theorem~\ref{thm:workspace_free_multiplication} guarantees zero workspace at depth at most $8L_0$ through term splitting. For the original decomposition, a small increase above $L_0$ may suffice in practice.
To motivate this, consider balanced partitions into
$L=L_0+c\le n$ batches, where $c$ is a nonnegative integer. Each batch size is at most $\lceil\mu(n)/L\rceil\le n-c$, because $\mu(n)/L\le nL_0/L=n-nc/L\le n-c$. As a heuristic, model the batch submatrices
$A_S,B_S,C_S^T$ as uniformly random binary matrices.
A random $(n-c)\times n$ binary matrix has full row rank
with probability
\begin{equation}
    p_{n,c}=\prod_{j=c+1}^{n}(1-2^{-j}),
\end{equation}
so its rank-deficiency probability is at most
$1-p_{n,c}\le 2^{-c}$.
Since each batch contains at most $n-c$ terms,
a union bound over the three submatrices of all $L$ batches
gives a failure probability at most $3L2^{-c}$ under this assumption.
This suggests that $L_0+O(1)$ batches may suffice in practice
to obtain a zero-workspace partition. In fact, this heuristic turns out to be pessimistic: for every decomposition we tested, $L_0$ batches already suffice, i.e., $\kappa(L_0)=0$ (Table~\ref{tab:frontier_kappa}).

For $2\le n\le18$, we verified explicit symmetric bilinear
decompositions matching the bounds in Table~\ref{tab:mu_bound} for all $n\neq 17$.
We computed their space--depth tradeoffs, obtaining the
workspace costs in Table~\ref{tab:frontier_kappa}.
Every entry attains the lower bound in Eq.~\eqref{eq:depth_space_rel}, so these values are optimal for the given decompositions.

\begin{table}[H]
\centering
\normalsize
\setlength{\tabcolsep}{4pt}
\begin{tabular}{l|*{17}{r}}
\hline
$n$ & 2 & 3 & 4 & 5 & 6 & 7 & 8 & 9 & 10 & 11 & 12 & 13 & 14 & 15 & 16 & 17 & 18 \\
$\mu(n)$ & 3 & 6 & 9 & 13 & 15 & 22 & 24 & 30 & 33 & 39 & 42 & 48 & 51 & 54 & 60 & $68^{\dagger}$ & 69 \\
\hline
$\kappa(1)$ & 3 & 9 & 15 & 24 & 27 & 45 & 48 & 63 & 69 & 84 & 90 & 105 & 111 & 117 & 132 & 153 & 153 \\
$\kappa(2)$ & 0 & 0 & 3 & 6 & 6 & 12 & 12 & 18 & 21 & 27 & 27 & 33 & 36 & 36 & 42 & 51 & 51 \\
$\kappa(3)$ & 0 & 0 & 0 & 0 & 0 & 3 & 0 & 3 & 3 & 6 & 6 & 9 & 9 & 9 & 12 & 18 & 15 \\
$\kappa(4)$ & -- & 0 & 0 & 0 & 0 & 0 & 0 & 0 & 0 & 0 & 0 & 0 & 0 & 0 & 0 & 0 & 0 \\
\hline
\end{tabular}
\caption{Exact workspace cost $\kappa(L)$ for the explicit symmetric bilinear decompositions used in our finite-size study. A dash indicates $L>\mu(n)$. For $n=17$, Refs.~\cite{Ballet2021survey,BalletPieltant2018} report $\mu(17)\le 67$, but we could not reproduce a length-$67$ decomposition; we therefore use a verified length-$68$ decomposition.}
\label{tab:frontier_kappa}
\end{table}

\subsection{Space-efficient in-place addition}
\label{sec:black_box_adder}

We now use the batched multiplier with Toffoli depth $L$, Toffoli
count $\mu(n)$, and workspace $\kappa(L)$. The construction of
Theorem~\ref{thm:main_inplace} computes
$f^{\alpha+\beta}$ and $f^{-(\alpha+\beta)}$ simultaneously.
We next consider two alternatives that reduce workspace by increasing Toffoli depth.

The two output components need not be computed simultaneously. One can
instead compute the first component, erase the corresponding old input,
and reuse that register for the second component.

\begin{proposition}
\label{lem:two_multiplication}
Under the encoding $\mathcal E_p$, exact in-place addition can be
implemented using $2\mu(n)$ Toffoli gates, Toffoli depth $2L$, and
$n+\kappa(L)$ workspace qubits, for a total of
$5n+\kappa(L)$ qubits. The circuit uses single-qubit measurements and
classically controlled Clifford corrections.
\end{proposition}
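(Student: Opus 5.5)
We carry out the two multiplications sequentially, using the batched multiplier of Theorem~\ref{thm:batched_multiplication} as a black box: Toffoli depth $L$, $\mu(n)$ Toffolis, $\kappa(L)$ workspace qubits reused across calls. The registers are $R_1=f^\alpha$, $R_2=f^{-\alpha}$, $R_3=f^\beta$, $R_4=f^{-\beta}$, plus one fresh $n$-qubit ancilla $T$ initialized to $0$.

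\emph{Step 1.} Apply the out-of-place multiplier to $(R_1,R_3)$ with target $T$, producing $T=f^{\alpha+\beta}$. This costs depth $L$, $\mu(n)$ Toffolis and $\kappa(L)$ workspace, which returns to zero.

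\emph{Step 2.} Erase $R_3$. By Eq.~\eqref{eq:ip_cleanup_identities}, $f^\beta=f^{-\alpha}f^{\alpha+\beta}=m_n(R_2,T)$, and $m_n$ is bilinear over $\GF{2}$ on the remaining registers $R_2$ and $T$. Lemma~\ref{lem:bilinear_erasure} therefore applies: measure $R_3$ in the $X$ basis and apply classically controlled CZ corrections between $R_2$ and $T$. No Toffolis are used. We then swap $T$ and $R_3$, so that $R_3=f^{\alpha+\beta}$ and $T=0$.

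\emph{Step 3.} Repeat with $T$ reused. Apply the multiplier to $(R_2,R_4)$ with target $T$, producing $T=f^{-(\alpha+\beta)}$ at depth $L$. Since $f^{-\beta}=f^{\alpha}f^{-(\alpha+\beta)}$ is bilinear in $(R_1,T)$, erase $R_4$ by Lemma~\ref{lem:bilinear_erasure} and swap $T$ into $R_4$.

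Tallying the costs gives $2\mu(n)$ Toffolis and depth $2L$. The peak workspace is the ancilla $T$ plus the multiplier workspace, which is live only during a multiplication, so it is $n+\kappa(L)$. Together with the $4n$ data qubits this gives $5n+\kappa(L)$ in total.

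The main point needing care is correctness on superpositions. At each erasure step the register being measured must be a fixed bilinear function of registers that are themselves not measured. For Step 2, with the state restricted to valid encodings, $R_3=m_n(R_2,T)$ holds identically on every branch. The phase $(-1)^{m\cdot\Phi}$ then becomes $(-1)^{x^TQ_my}$ with $x,y$ the contents of $R_2,T$, and the CZ corrections cancel it exactly. Step 3 is the same argument with $(R_1,T)$. A secondary check is that the multiplier's own measurement-based erasure, and its workspace, have fully finished before the swap and before $T$ is reused, which holds because each call to Theorem~\ref{thm:batched_multiplication} returns its workspace to zero.
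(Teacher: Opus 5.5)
Your proposal is correct and follows the paper's proof. You compute $f^{\alpha+\beta}$ into one fresh $n$-qubit register, erase $f^\beta$ with Lemma~\ref{lem:bilinear_erasure} via $f^\beta=f^{-\alpha}f^{\alpha+\beta}$, reuse a cleared register for $f^{-(\alpha+\beta)}$, and then erase $f^{-\beta}$ via $f^{-\beta}=f^{\alpha}f^{-(\alpha+\beta)}$. The only difference is bookkeeping: you swap the product into $R_3$ and reuse $T$, whereas the paper computes $f^{-(\alpha+\beta)}$ directly into the cleared $R_3$. This is a register relabeling and does not affect the Toffoli count, Toffoli depth, or workspace.
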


\begin{proof}
Starting from the four data registers and one blank ancillary register,
first compute
\begin{equation}
    (f^\alpha,f^{-\alpha},f^\beta,f^{-\beta},0)
    \longmapsto
    (f^\alpha,f^{-\alpha},f^\beta,f^{-\beta},
     f^{\alpha+\beta}).
\end{equation}
Using
$f^\beta=f^{-\alpha}f^{\alpha+\beta}$,
erase the register containing $f^\beta$ with
Lemma~\ref{lem:bilinear_erasure}. The cleared register is then reused
to compute $f^{-(\alpha+\beta)}$. Finally,
$f^{-\beta}=f^\alpha f^{-(\alpha+\beta)}$ allows the remaining old
input register to be erased in the same way.

The two field multiplications are sequential and reuse the same
multiplier workspace. Hence the circuit uses $2\mu(n)$ Toffoli gates,
Toffoli depth $2L$, and one additional $n$-qubit ancillary register
together with $\kappa(L)$ multiplier workspace.
\end{proof}

The additional ancillary register can be eliminated altogether by allowing
one more multiplication. For compactness, write
$u=f^\alpha$ and $v=f^\beta$ within this construction.

\begin{proposition}
\label{lem:three_multiplication}
Under the encoding $\mathcal E_p$, exact in-place addition can be
implemented using $3\mu(n)$ Toffoli gates, Toffoli depth $3L$, and
$\kappa(L)$ workspace qubits, for a total of
$4n+\kappa(L)$ qubits. (When $\kappa(L)>0$, the circuit also uses single-qubit measurements and classically controlled Clifford corrections.)
\end{proposition}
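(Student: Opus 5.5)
The plan is to realize in-place addition as three successive multiply-accumulate operations on the four data registers, in the spirit of the three-XOR swap, so that no ancillary $n$-qubit register is ever needed. The key observation is that Theorem~\ref{thm:batched_multiplication} (out-of-place addition) does not really need its inputs to be valid encodings. It implements $|x\rangle|y\rangle|h\rangle\mapsto|x\rangle|y\rangle|h\oplus xy\rangle$ for \emph{arbitrary} $x,y,h\in\GF{2^n}$, because the circuit only realizes the bilinear map $C((Ax)\circ(By))$ and XORs it into the target. Each such step uses $\mu(n)$ Toffoli gates, Toffoli depth $L$ and $\kappa(L)$ workspace. The workspace is returned to $|0\rangle$ after every step, so it can be reused.

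First I apply a SWAP of the two registers holding $\mathcal E_p(\beta)$. This costs no Toffolis and turns $(f^\beta,f^{-\beta})$ into $(w,w^{-1})$ with $w=f^{-\beta}$, i.e.\ an encoding of $-\beta$. Writing $u=f^\alpha$, the state is $(u,u^{-1},w,w^{-1})$. I then perform three multiply-adds, in each of which the target register is distinct from both factor registers.
\begin{equation}
\begin{aligned}
&\text{(i) } R_4\mathrel{\oplus}= u^{-1}R_3: && (w,\; w^{-1}+u^{-1}w),\\
&\text{(ii) } R_3\mathrel{\oplus}= u\,R_4: && (w+uw^{-1}+w,\; w^{-1}+u^{-1}w)=(uw^{-1},\; w^{-1}+u^{-1}w),\\
&\text{(iii) } R_4\mathrel{\oplus}= u^{-1}R_3: && (uw^{-1},\; w^{-1}+u^{-1}w+w^{-1})=(uw^{-1},\; u^{-1}w).
\end{aligned}
\end{equation}
Since $uw^{-1}=f^{\alpha+\beta}$ and $u^{-1}w=f^{-(\alpha+\beta)}$, the final state is $|\mathcal E_p(\alpha)\rangle|\mathcal E_p(\alpha+\beta)\rangle$, as required. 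All arithmetic is in characteristic $2$, so the cancellations $w+w=0$ and $w^{-1}+w^{-1}=0$ are exact. The registers $R_1,R_2$ holding $u,u^{-1}$ are used only as controls and are never modified.

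The resource count then follows directly. There are three sequential invocations of Theorem~\ref{thm:batched_multiplication}, giving $3\mu(n)$ Toffoli gates and Toffoli depth $3L$. The shared multiplier workspace is $\kappa(L)$, on top of the $4n$ data qubits, for a total of $4n+\kappa(L)$ qubits. When $\kappa(L)=0$, every batch places its forms and output directions directly in the data registers by CNOT basis changes, so no temporary products arise. In that case the circuit is purely CNOT+Toffoli, as in Theorem~\ref{thm:workspace_free_multiplication}. Only when $\kappa(L)>0$ does Lemma~\ref{lem:bilinear_erasure} introduce measurements and classically controlled Clifford corrections.

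The main obstacle is not any single step. It is making sure the intermediate states are legitimate inputs to the multiplier and that the cancellations close up. After step (i), $R_4$ holds a value that is not of the form $f^\gamma$, and it might even be $0$. So it matters that the batched multiplier is a fixed bilinear circuit correct on all of $\GF{2^n}^2\times\GF{2^n}$, not only on valid encodings. I would verify this directly from the proof of Theorem~\ref{thm:batched_multiplication}. Linearity then extends the basis-state computation to superpositions of valid encodings.
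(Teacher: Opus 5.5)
Your proof is correct and takes essentially the paper's approach: three sequential multiply-accumulates on the four data registers in XOR-swap fashion, reusing the $\kappa(L)$ multiplier workspace, plus one register swap. The only cosmetic difference is that you swap first and use multipliers $u^{-1},u,u^{-1}$, whereas the paper uses $u,u^{-1},u$ and swaps at the end. You also state explicitly that the multiplier must be correct on arbitrary field elements, since intermediate values such as $w^{-1}+u^{-1}w$ need not be valid encodings; the paper relies on this only implicitly.
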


\begin{proof}
Apply the following three multiply-add operations, where $M_{i,j\to k}$ adds the product of registers $i$ and $j$ to register $k$:
\begin{align}
(u,u^{-1},v,v^{-1})
&\xrightarrow{\,M_{1,3\to4}\,}
(u,u^{-1},v,v^{-1}+uv)
\nonumber\\
&\xrightarrow{\,M_{2,4\to3}\,}
(u,u^{-1},u^{-1}v^{-1},v^{-1}+uv)
\nonumber\\
&\xrightarrow{\,M_{1,3\to4}\,}
(u,u^{-1},u^{-1}v^{-1},uv).
\end{align}
The second step follows from
\begin{equation}
    v+u^{-1}(v^{-1}+uv)=u^{-1}v^{-1},
\end{equation}
and the third from
\begin{equation}
    v^{-1}+uv+u(u^{-1}v^{-1})=uv.
\end{equation}
Swapping the final two registers gives
\begin{equation}
    (u,u^{-1},uv,u^{-1}v^{-1}),
\end{equation}
which is the desired encoded sum. The three multiplications are
sequential and reuse the same workspace.
\end{proof}

\subsection{Resource comparison}
\label{sec:optimized_comparison}

We first compare our three encoded in-place constructions, summarized in Table~\ref{tab:adder_resource_comparison}. The parallel construction minimizes Toffoli depth, while the other two reduce workspace by reusing registers across successive multiplications at the cost of increased Toffoli depth. Increasing the number of batches provides an additional space–depth tradeoff.

\begin{table}[htbp]
\centering
\small
\begin{tabular}{l|c|c|c|c}
\hline
Construction & Toffoli count & Toffoli depth & Workspace qubits & Total qubits\\
\hline
Parallel two multiplications
& $2\mu(n)$
& $L$
& $2n+2\kappa(L)$
& $6n+2\kappa(L)$\\
Sequential two multiplications
& $2\mu(n)$
& $2L$
& $n+\kappa(L)$
& $5n+\kappa(L)$\\
Three multiplications
& $3\mu(n)$
& $3L$
& $\kappa(L)$
& $4n+\kappa(L)$\\
\hline
\end{tabular}
\caption{Space--depth tradeoffs for encoded in-place addition. The four encoded data registers account for $4n$ qubits in every construction.}
\label{tab:adder_resource_comparison}
\end{table}

To put these space--depth tradeoffs in perspective, we compare our encoded constructions with existing binary adders from the literature~\cite{Takahashi2010Addition,Gidney2018,Remaud2025,Thapliyal2020QCLA,Gidney2020Blocks,Wang2024OptimalDepth,Gu2025Adder}. We consider seven designs, ranging from space-efficient ripple-carry adders to depth-efficient carry-lookahead adders. For constructions originally designed for out-of-place addition, we use the standard conversion to in-place addition~\cite{Gidney2020Blocks} together with Gidney's measurement-based uncomputation~\cite{Gidney2018}. This conversion does not require any additional Toffoli gates or workspace, preserving the Toffoli depth and total qubit count of the original circuits. Since our encoding uses $4n$ data qubits, compared with $2n$ for binary addition, we compare total qubit counts.

\begin{figure}[htbp]
\centering
\includegraphics[width=\textwidth]{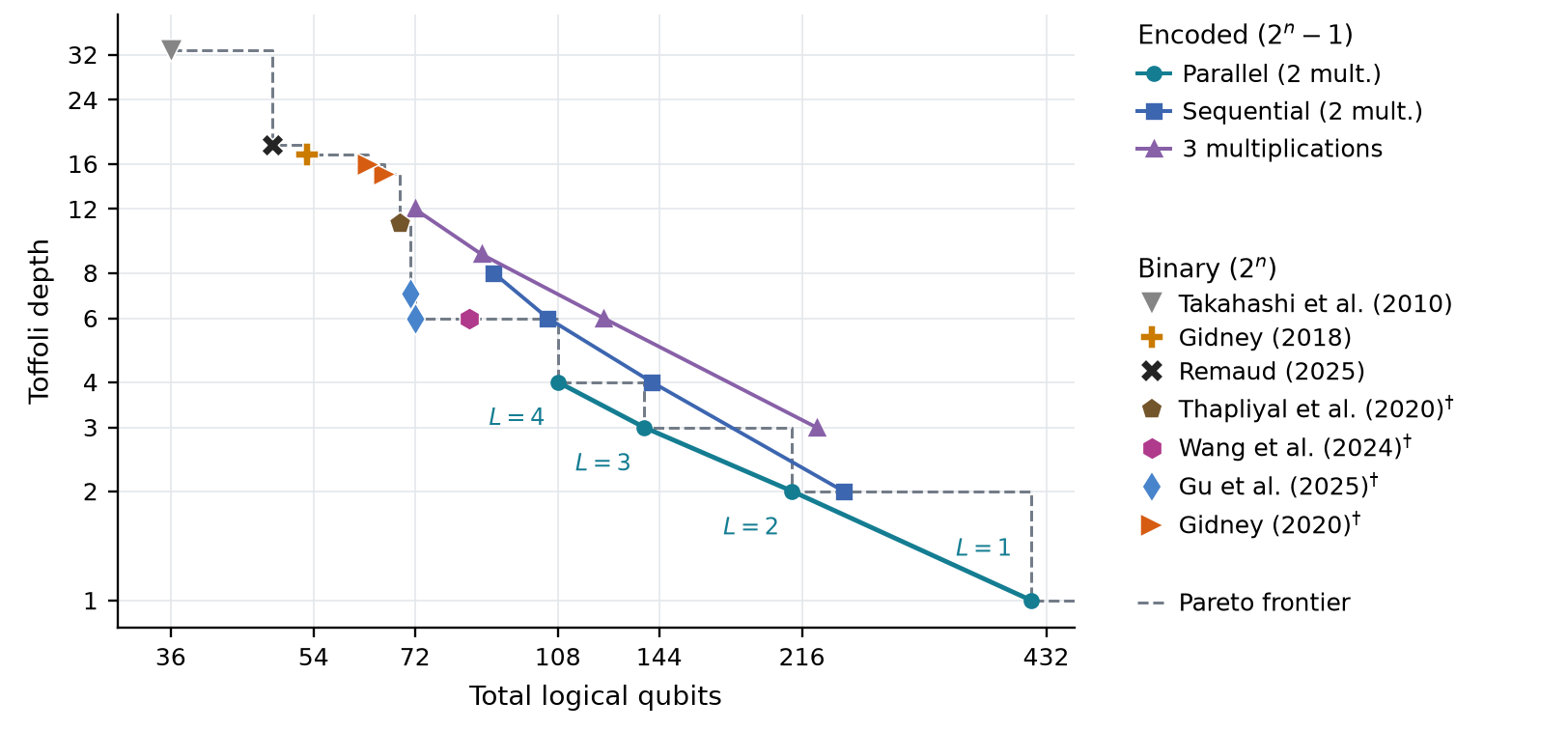}
\caption{Space--Toffoli-depth tradeoffs for $n=18$. Total qubit counts include all data and workspace registers. Both axes are logarithmic. The encoded schemes correspond to the three constructions in Table~\ref{tab:adder_resource_comparison} with $L=1,\ldots,4$. The dashed line marks the Pareto frontier of the evaluated binary
and encoded implementations. Daggers ($\dagger$) mark out-of-place adders converted to in-place
adders using Gidney's measurement-based cleanup~\cite{Gidney2018}.}
\label{fig:space_depth}
\end{figure}

Figure~\ref{fig:space_depth} shows the resulting comparison for $n=18$. The shallowest binary implementation considered has Toffoli depth $6$ and uses $72$ qubits. Below this depth, only our encoded constructions are available: the parallel construction ranges from Toffoli depth $4$ with $108$ qubits to Toffoli depth $1$ with $414$ qubits. Thus, below Toffoli depth $6$, the Pareto frontier consists of the parallel construction with $L=1,\ldots,4$.

\section{Application: programmable rotation}
\label{sec:applications}

In this section, we consider programmable rotations, which play an
important role in quantum computation~\cite{Gidney2018,
vonBurg2021Catalysis,Lee2021THC,Low2021multiplexed,Kim2025catalytic}.
A programmable rotation is the operation
\begin{equation}
\label{eq:programmable_rotation}
    \ket{a}\ket{\psi}
    \longmapsto
    \ket{a}e^{2\pi i aZ/N}\ket{\psi},
\end{equation}
where the angle is specified by an integer $a\in\mathbb Z_N$ stored in an $n$-qubit register, and $\ket{\psi}$ is the state of a single target qubit. In the standard binary representation, $N=2^n$; in our construction, $N=2^n-1$ and the register is replaced by one holding $\mathcal E_p(a)$.
To explain how our encoded addition can be used to implement such
rotations, let us first recall the relation between integer addition
and phase rotations.

In the standard binary representation, consider the phase-gradient state
\begin{equation}
    \ket{\Gamma_{2^n}}
    =
    \frac{1}{\sqrt{2^n}}
    \sum_{x=0}^{2^n-1}
    e^{-2\pi i x/2^n}\ket{x}.
\end{equation}
Adding $b$ to this register modulo $2^n$ multiplies it by the phase $e^{2\pi i b/2^n}$ and leaves it otherwise unchanged~\cite{Gidney2018}. Thus, the programmable rotation reduces to adding $\pm a$ to the phase-gradient register, with the sign controlled by the target qubit. Our encoding admits an analogous construction.

To perform phase kickback in our representation, define the encoded
phase-gradient state
\begin{equation}
\label{eq:encoded_phase_gradient}
    \ket{\Gamma_M}
    =
    \frac{1}{\sqrt M}
    \sum_{x=0}^{M-1}
    e^{-2\pi i x/M}\ket{\mathcal E_0(x)},
    \qquad M=2^n-1.
\end{equation}
Let $\widetilde T_b$ denote multiplication by $f^b$ on the encoded
register. Since $f^b f^x=f^{x+b}$, a change of summation variable
gives
\begin{equation}
    \widetilde T_b\ket{\Gamma_M}
    =
    e^{2\pi i b/M}\ket{\Gamma_M}.
\end{equation}
Thus field multiplication produces the required phase while returning
the encoded phase-gradient state unchanged. Throughout this section,
we assume that this ancillary state has already been prepared; see Ref.~\cite{Kim2025catalytic} for its preparation. Since it is returned unchanged, it needs to be prepared only once and can be reused throughout the computation. We note that without such a catalyst, i.e., starting from ancillas in $\ket{0}$, constant-depth synthesis of single-qubit rotations requires multi-qubit gates of growing width; with bounded-width gates, the depth must grow as $\Theta(\log\log(1/\varepsilon))$ in the worst case~\cite{Vasconcelos2026}.

To obtain the two opposite phases required by the target qubit, we
store the angle in the inverse-pair encoding
$\mathcal E_p(a)=(f^a,f^{-a})$. A swap of these two field registers,
controlled by the target, places
\begin{equation}
    u_q=f^{(-1)^q a}
\end{equation}
in the first register and $u_q^{-1}$ in the second. Multiplication by
the first register then produces the phase
$e^{2\pi i(-1)^q a/M}$. Reversing the controlled swap restores the
original ordering of the angle pair. The inverse rotation is obtained
by reversing which factor is selected.

It remains to explain how this multiplication is performed cleanly.
For a phase-gradient basis component $w=f^x$, first compute
$z=u_qw$ in a fresh register. The old register can then be erased
without further Toffoli gates: the identity
\begin{equation}
    w=u_q^{-1}z
\end{equation}
expresses its contents as a bilinear function of two registers that
remain, so Lemma~\ref{lem:bilinear_erasure} applies. Swapping the new
product into the cleared register completes the in-place
multiplication. After restoring the angle pair, the joint action on a computational basis state $\ket{q}$ of the target qubit is
\begin{equation}
\label{eq:encoded_programmable_rotation}
\begin{aligned}
    &\ket{\mathcal E_p(a)}\ket q\ket{\Gamma_M}\ket 0_W\\
    &\qquad\longmapsto
    e^{2\pi i(-1)^q a/M}
    \ket{\mathcal E_p(a)}\ket q\ket{\Gamma_M}\ket 0_W,
\end{aligned}
\end{equation}
where $W$ denotes the workspace. By linearity, this implements Eq.~\eqref{eq:programmable_rotation} with $N=M$. This establishes that both the
angle data and the phase-gradient state are preserved, and that the
workspace is returned to its initial state.

The controlled swaps introduce an additional cost beyond the field
multiplication. We fan out the target using CNOT gates, perform the
$n$ bitwise controlled swaps in parallel, and reverse the swaps and
fanout after the multiplication and erasure. Each controlled swap
is a single Toffoli gate up to CNOT gates, so each swap layer uses $n$
Toffoli gates and has Toffoli depth one. If the multiplication
has Toffoli depth $L$, the rotation therefore has Toffoli depth and Toffoli count
\begin{equation}
    D_{\mathrm{rot}}=L+2,
    \qquad
    N_{\mathrm{rot}}=\mu(n)+2n.
\end{equation}
The two additional layers select and restore the sign. The
space--depth tradeoffs of Section~\ref{sec:optimized} consequently
carry over to the programmable rotation: reducing the multiplication
depth requires additional workspace.

We now describe where the angle data arise and when they can be
reused. Electronic-structure algorithms based on double
factorization or tensor hypercontraction use Givens rotations to
change the orbital basis~\cite{vonBurg2021Catalysis,Lee2021THC}.
For a given rotation position, the angle depends on an index $j$
that may be in superposition. The corresponding operation is a
multiplexed rotation,
\begin{equation}
\label{eq:multiplexed_rotation}
    U_{\boldsymbol\theta}
    =
    \sum_{j=0}^{c-1}
    |j\rangle\!\langle j|
    \otimes e^{2\pi i\theta_j Z}.
\end{equation}
These rotations account for a substantial part of the non-Clifford
cost in the cited simulations~\cite{Low2021multiplexed}.

Choose integers $a_j\in\mathbb Z_M$ such that $a_j/M$ approximates
$\theta_j$ modulo one. Since the table is classical, its entries
$\mathcal E_p(a_j)$ can be computed before circuit execution and
loaded directly by QROM~\cite{Babbush2018}:
\begin{equation}
    |j\rangle|0\rangle^{\otimes 2n}
    \longmapsto
    |j\rangle|\mathcal E_p(a_j)\rangle.
\end{equation}
Applying Eq.~\eqref{eq:encoded_programmable_rotation} and reversing
the lookup implements the multiplexed rotation with angles
$a_j/M$. No coherent conversion from binary angles to field
encodings is needed.

Note that the loaded angle can be used more than once. If the same angle data are needed several times within a block and the index register is preserved, the encoded entry can remain loaded until the end of the block. (The uses may involve different Pauli rotations or inverse rotations.) For instance, the number of uses is $K=2$ in Ref.~\cite{vonBurg2021Catalysis} and $K=4$ in Ref.~\cite{Lee2021THC}, as counted in Ref.~\cite{Low2021multiplexed}.

In the standard binary representation, each rotation requires an addition whose Toffoli depth is $\Theta(n)$ for ripple-carry adders~\cite{Cuccaro2004,Gidney2018} and $\Theta(\log n)$ for carry-lookahead adders~\cite{Draper2006}. In contrast, our rotation has Toffoli depth $L+2$, independent of $n$. However, the cost of the lookup and the cleanup is common to both, and it is amortized only over the $K$ uses. As such, for the values of $K$ above, the overall improvement would be a modest constant factor.

\section{Discussion}
\label{sec:discussion}

In this paper, we showed that integer addition modulo $2^n-1$ can be implemented in constant Toffoli depth, provided that the integers are encoded as powers of a primitive element of $\GF{2^n}$. The key idea was to replace addition by multiplication in $\GF{2^n}$, which can be done in constant Toffoli depth using only $O(n)$ Toffoli gates~\cite{ChudnovskyChudnovsky1988,Jang2023,Liao2025}. By storing each number together with its inverse, the addition can also be done in place, and with measurement-based uncomputation the Toffoli depth can be brought down to one.

Our encoding is non-standard, and converting it to and from the standard binary representation is not cheap. However, there are settings in which this point is moot. Programmable rotations are one such example: the encoded angles can be loaded directly from a classical table, so no conversion is ever needed~\cite{Kim2025catalytic}. It will be interesting to find other applications of this kind.

There are several directions worth pursuing. First, the constants in our construction are still large, mainly due to $\mu(n)$. Any improvement in the bilinear complexity of multiplication in $\GF{2^n}$ would directly translate to a reduction in both the space and the Toffoli count. Second, it is natural to ask whether other moduli, such as $2^n$, admit a similar construction. We leave these questions for future work.

\section*{AI Disclosure}
The main idea of this work, an addition scheme with constant Toffoli depth using a linear number of ancillas, was conceived by the authors. Claude Opus 5.5 (Anthropic) was used to edit the manuscript, to write the code that generated the plots, and to optimize the constants of our constructions. All AI-generated content was reviewed and verified by the authors, who take full responsibility for the contents of this paper.

\section*{Acknowledgements}
We acknowledge support from the Advanced Scientific Computing Research program in the Office of Science of the Department of Energy (DE-SC0026109). This work was done in part while IK was visiting the Simons Institute for the Theory of Computing. M.L. acknowledges support from the National Research Foundation of Korea (NRF) through a grant funded by the Ministry of Science and ICT (Grant No. RS-2025-00515537), the Institute for Information \& Communications Technology Promotion (IITP) grants funded by the Korean government (MSIT) (Grants Nos. RS-2019-II190003 and RS-2025-02304540), and the Korea Institute of Science and Technology Information (Grant No. P26028).

\bibliographystyle{myhamsplain2}
\bibliography{ref}

\end{document}